\newcommand{\Rset}{\mathbb{R}}
\newcommand{\X}{{\mathcal{X}}}
\newtheorem{thm}{Theorem}
\newtheorem{lem}{Lemma}
\newtheorem{cor}{Corollary}
\newtheorem{prop}{Proposition}
\newtheorem{obs}{Observation}
\begin{document}

\title{Combinatorial two-stage minmax regret problems under interval uncertainty}

\author[1]{Marc Goerigk}

\author[2]{Adam Kasperski}
\author[2]{Pawe{\l} Zieli\'nski}

\affil[1]{Network and Data Science Management, University of Siegen, Germany, \texttt{marc.goerigk@uni-siegen.de}}
\affil[2]{
Wroc{\l}aw  University of Science and Technology, Wroc{\l}aw, Poland\\
            \texttt{\{adam.kasperski,pawel.zielinski\}@pwr.edu.pl}}

 \date{}

\maketitle

\begin{abstract}

In this paper a class of combinatorial optimization problems is discussed. It is assumed that a feasible solution can be constructed in two stages. In the first stage the objective function costs are known while in the second stage they are uncertain and belong to an interval uncertainty set. In order to choose a solution, the minmax regret criterion is used. Some general properties of the problem are established and results for two particular problems, namely the shortest path and the selection problem, are shown.

\end{abstract}

\noindent\textbf{Keywords:} robust optimization; combinatorial optimization; minmax regret; two-stage optimization; complexity

\section{Introduction}

Consider the following deterministic single-stage \emph{combinatorial optimization problem}:
\begin{equation}
{\rm opt}^{\mathcal{P}}(\pmb{c})=\min_{\pmb{x}\in \mathcal{X}}\;  \pmb{c}^T\pmb{x}, \tag{$\mathcal{P}$}
\end{equation}
where $\mathcal{X}\subseteq \{0,1\}^n$ is a set of \emph{feasible solutions} and $\pmb{c}\in \Rset_+^n$ is a vector of nonnegative objective function costs. Typically, $\mathcal{X}$ is described by a system of linear constraints involving binary variables $x_i$, $i\in [n]$ (we will use the notation $[n]=\{1,\dots,n\}$), which leads to a 0-1 programming problem.  Solution $\pmb{x}\in \mathcal{X}$ can be interpreted as a characteristic vector of some finite element set $E$. For example, $E$ can be a set of edges of a graph $G=(V,E)$ and $\pmb{x}$ describes some objects in $G$ such as paths, trees, matchings etc.

In many practical applications the vector of objective function costs $\pmb{c}$ is uncertain and it is only known to belong to an uncertainty set $\mathcal{U}$. There are various methods of defining $\mathcal{U}$ which depend on the application and the information available. Among the easiest and most common is the \emph{interval uncertainty representation}
(see, e.g.,~\cite{KY97}), in which $c_i\in [\underline{c}_i, \overline{c}_i]$ for each $i\in [n]$ and $\mathcal{U}=\prod_{i\in [n]} [\underline{c}_i,\overline{c}_i]\subseteq\Rset_+^n$. In order to choose a solution, for a specified $\mathcal{U}$, one can apply a robust decision criterion, which takes into account the worst cost realizations. Under the interval uncertainty representation, the \emph{minmax regret} criterion (also called \emph{Savage criterion}~\cite{SA51}) has attracted a considerable attention in the literature.  The \emph{regret} of a given solution $\pmb{x}\in \mathcal{X}$ under a cost scenario $\pmb{c}\in \mathcal{U}$ is the quantity $\pmb{c}^T\pmb{x}-{\rm opt}^{\mathcal{P}}(\pmb{c})$. It expresses a deviation of solution $\pmb{x}$ from the optimum and can be interpreted as the maximal opportunity loss after $\pmb{x}$ is implemented. In the \emph{single-stage minmax regret} version of $\mathcal{P}$ we seek a solution minimizing the maximum regret, i.e. we study the following problem:
\begin{equation}
\label{regrP}
\min_{\pmb{x}\in \mathcal{X}}\max_{\pmb{c}\in \mathcal{U}} (\pmb{c}^T\pmb{x}-{\rm opt}^{\mathcal{P}}(\pmb{c})) \tag{\textsc{SStR}~$\mathcal{P}$}.
\end{equation}

The \textsc{SStR}~$\mathcal{P}$ problem has been discussed in a number of papers, for example when $\mathcal{P}$ is the minimum spanning tree~\cite{YKP01}, the shortest path~\cite{KPY01}, the minimum $s$-$t$ cut~\cite{ABV08}, the minimum assignment~\cite{ABV05, PA11}, or the selection~\cite{AV01, C04} problem (in this case $\mathcal{X}=\{\pmb{x}\in \{0,1\}^n: x_1+\dots+x_n=p\}$ for some fixed $p\in [n]$). Surveys of known results in this area can be found in~\cite{ABV09, KZ16b}. Unfortunately, \textsc{SStR}~$\mathcal{P}$ turned out to be NP-hard for all previously mentioned problems~\cite{AH04, AL04, Z04, KZ06a}, with a notable exception when $\mathcal{P}$ is the selection problem, for which polynomial algorithms were established in~\cite{AV01, C04}. The \textsc{SStR}~$\mathcal{P}$ problem has some well known general properties. There is a nice characterization of scenario $\pmb{c}\in \mathcal{U}$ maximizing the regret of a given solution $\pmb{x}$ (called a \emph{worst-case scenario} for $\pmb{x}$), namely, $c_i=\overline{c}_i$ if $x_i=1$ and $c_i=\underline{c}_i$ if $x_i=0$ for each $i\in [n]$. Notice that this scenario depends only on $\pmb{x}$ and the problem of computing the maximum regret of a given solution has the same complexity as $\mathcal{P}$. Also, there is a general 2-approximation algorithm known for \textsc{SStR}~$\mathcal{P}$, under the assumption that $\mathcal{P}$ is polynomially solvable~\cite{KZ06,CO10,chassein2015new}. We get a 2-approximate solution by solving $\mathcal{P}$ under the so-called \emph{midpoint scenario} $\pmb{c}^m\in \mathcal{U}$ such that $c_i^m=(\underline{c}_i+\overline{c}_i)/2$ for each $i\in [n]$. Exact algorithms for solving \textsc{SStR}~$\mathcal{P}$ are based on compact mixed-integer programming (MIP) formulations (see, e.g.,~\cite{YKP01}), when $\mathcal{P}$ has a special structure, or constraint generation technique in general (see, e.g.,~\cite{PA13}).

In some applications a solution from $\mathcal{X}$ can be constructed in two stages. Namely, a partial solution is chosen now (in the first stage) and is completed in the future (in the second stage). The current, first-stage costs are known while the future second-stage costs are uncertain and belong to an uncertainty set $\mathcal{U}$. However, the partial solution can be completed after a second-stage cost scenario is revealed. The problem consists in computing a best  first-stage solution, which corresponds to the decision which must be made now.
The two-stage approach has a long tradition in stochastic optimization (see, e.g.,~\cite{KM05}). When a probability distribution in $\mathcal{U}$ is unknown, then a robust two-stage version of $\mathcal{P}$ can be considered. First such a model was discussed in~\cite{KMU08} for the assignment problem. This approach was also applied to the minimum spanning tree~\cite{KZ11} and the selection problems~\cite{CGKZ18}. In these papers the robust minmax criterion has been applied, i.e. a first-stage solution is determined minimizing the largest total first and second-stage cost.

In this paper we wish to investigate the two-stage version of problem $\mathcal{P}$ under the interval uncertainty representation. Namely, for each second-stage cost an interval of its possible values is provided.  We use the minmax regret criterion to choose a solution. The interpretation of this problem is the same as in the case of \textsc{SStR}~$\mathcal{P}$. We seek a first-stage solution, which minimizes the maximum regret, i.e. the maximum distance to a best first-stage solution.
 We will show that this problem has different properties than its single-stage counterpart. In particular, there is no easy characterization of a worst-case scenario of a given first-stage solution, although  there is still a worst-case scenario which is extreme (the second-stage costs take their upper or lower bounds under this scenario). In fact, the problem of computing the maximum regret can be NP-hard, even if $\mathcal{P}$ is solvable in polynomial time.
 Also, the midpoint heuristic does not guarantee any approximation ratio in general. We will show a general method of solving the problem, which is based on a MIP formulation. We then study two special cases, when $\mathcal{P}$ is the shortest path and the selection problem.
 
 This paper is organized as follows. In Section~\ref{secpf} we state the problem. We also consider three inner problems, in particular the problem of computing the maximum regret of a given first-stage solution. In Section~\ref{secmip}, we construct MIP formulations, which can be used to compute exact or approximate solutions. Section~\ref{secsp} is devoted to the two-stage version of the shortest path problem. We proceed with the study of two variants of this problem, which have different computational properties. We show that both computing an optimal first-stage solution and the maximum regret of a given first-stage solution are NP-hard. In Section~\ref{secsel} we discuss the selection problem.  We show that for this problem the maximum regret of a given first-stage solution can be computed in polynomial time and the optimal first-stage solution can by determined by using a compact MIP formulation. We also propose a greedy heuristic for this problem. Finally, the paper is concluded and further research questions are pointed out in Section~\ref{sec:conclusions}.

\section{Problem formulation}
\label{secpf}

In this paper we assume that a solution from $\mathcal{X}$ can be built in two-stages. Given a vector $\pmb{x}\in\{0,1\}^n$, let 
\begin{equation}
\label{ract}
\mathcal{R}(\pmb{x})=\{\pmb{y}\in \{0,1\}^n: \pmb{x}+\pmb{y}\in \mathcal{X}\}
\end{equation}
be the set of \emph{recourse actions} for $\pmb{x}$. Vector $\pmb{y}\in \mathcal{R}(\pmb{x})$ is a completion of the partial solution $\pmb{x}$ to a feasible one.
Let
$$\mathcal{X}'=\{\pmb{x}\in\{0,1\}^n:\mathcal{R}(\pmb{x})\neq \emptyset\}$$ 
be the set of \emph{feasible first-stage solutions}. 
 Define $\mathcal{Z}=\{(\pmb{x},\pmb{y})\in\{0,1\}^{2n}: \pmb{x}\in \mathcal{X}', \pmb{y}\in \mathcal{R}(\pmb{x})\}$ as the set of all possible combinations between partial first-stage solutions and recourse actions. Given a first-stage cost vector $\pmb{C}\in \Rset_+^n$ and a second-stage cost vector $\pmb{c}\in \Rset_+^n$, we consider the following \emph{two-stage problem}:
\begin{equation}
{\rm Opt}(\pmb{c})=\min_{(\pmb{x},\pmb{y})\in \mathcal{Z}}  (\pmb{C}^T\pmb{x}+\pmb{c}^T\pmb{y}). \tag{\textsc{TSt}~
$\mathcal{P}$}
\end{equation}
Given $\pmb{x}\in \mathcal{X}'$ and $\pmb{c}\in \Rset_+^n$, we will also examine the following \emph{incremental problem}:
\begin{equation}
{\rm Inc}(\pmb{x},\pmb{c})=\pmb{C}^T\pmb{x}+\min_{\pmb{y}\in \mathcal{R}(\pmb{x})} \pmb{c}^T\pmb{y} \tag{\textsc{Inc}~$\mathcal{P}$}
\end{equation}
in which we seek a best recourse action for $\pmb{x}\in \mathcal{X}'$ and $\pmb{c}$. The quantity ${\rm Inc}(\pmb{x},\pmb{c})-{\rm Opt}(\pmb{c})$ is called the \emph{regret} of $\pmb{x}$ under $\pmb{c}$.
  Suppose that the second-stage costs are uncertain and we only know that $c_i\in [\underline{c}_i,\overline{c}_i]$ for each $i\in [n]$. We thus consider the \emph{interval uncertainty representation} $\mathcal{U}=\prod_{i\in [n]} [\underline{c}_i,\overline{c}_i]$. Each possible second-stage cost vector $\pmb{c}\in \mathcal{U}$ is called a \emph{scenario}. Let us define the maximum regret of a given first-stage solution $\pmb{x}\in \mathcal{X}'$ as follows:
 \begin{equation}
 \label{defz}
  	Z(\pmb{x})=\max_{\pmb{c}\in \mathcal{U}}\left({\rm Inc}(\pmb{x},\pmb{c})-{\rm Opt}(\pmb{c})\right).
\end{equation}
 A scenario which maximizes the right hand side of~(\ref{defz}) is  called a \emph{worst-case scenario} for $\pmb{x}$.
 In this paper we study the following \emph{two-stage minmax regret problem}:
 \begin{equation}
 \min_{\pmb{x}\in \mathcal{X}'} Z(\pmb{x})=\min_{\pmb{x}\in \mathcal{X}'}\max_{\pmb{c}\in \mathcal{U}}\min_{\pmb{y}\in \mathcal{R}(\pmb{x})}\left(\pmb{C}^T\pmb{x}+\pmb{c}^T\pmb{y}-{\rm Opt}(\pmb{c})\right). \tag{\textsc{TStR}~$\mathcal{P}$}
 \end{equation}

Let us illustrate \textsc{TStR}~$\mathcal{P}$ when $\mathcal{P}$ is the \textsc{Shortest Path} problem shown in 
Figure~\ref{figex1}~a. In this case, $\mathcal{X}$ is the set of characteristic vectors of the simple $s-t$ paths in a given network $G=(V,A)$.

\begin{figure}[ht]
	\centering
	\includegraphics[height=4cm]{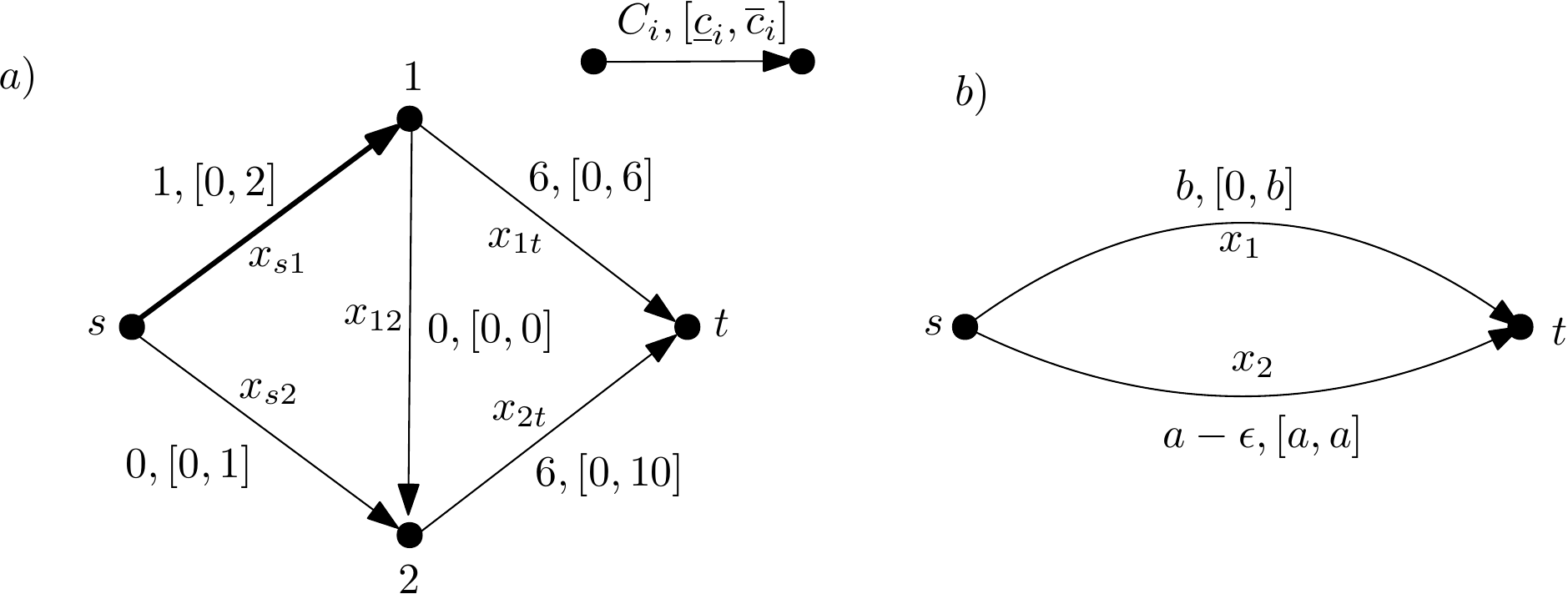}
	\caption{Two instances of \textsc{TStR Shortest Path}. In the instance b) we assume that $b\gg 2a \gg \varepsilon>0$.}\label{figex1}
\end{figure}

Let $\pmb{x}=(x_{s1},x_{s2}, x_{12}, x_{1t}, x_{2t})^T\in \mathcal{X}'$ denote a first-stage solution to the instance~a). A candidate solution is $\pmb{x}'=(1,0,0,0,0)^T$, in which the arc $(s,1)$ is selected in the first stage. Under scenario $\underline{\pmb{c}}=(0,0,0,0,0)^T$, this partial solution can be completed to a path by choosing arc $(1,t)$ with total costs $1+0=1$. As $\textrm{Opt}(\underline{\pmb{c}}) = 0$, the maximum regret of $\pmb{x}'$ is at least~1. It can be verified that there is no other scenario that results in a higher regret.
As a second example, the first-stage solution $(0,1,0,0,0)^T$ has the maximum regret equal to~10. Indeed in a worst scenario $\pmb{c}'$ the cost of $(2,t)$ is set to~10 and the costs of the remaining arcs are set to~0. The arcs $(2,t)$ must be selected in the second stage and ${\rm opt}(\pmb{c}')=0$, which results in the regret of~10.
Notice also that solution $(0,0,0,0,0)^T$, when no arc is selected in the first stage, has the maximum regret equal to~2, achieved by using scenario $\overline{\pmb{c}}$. A full enumeration reveals that $\pmb{x}'$ is in fact optimal.

The instance in Figure~\ref{figex1}~b demonstrates that the mid-point heuristic does not guarantee the approximation ratio of~2 for \textsc{TStR}~$\mathcal{P}$. Indeed, if we solve the \textsc{TSt Shortest Path} problem for the second-stage midpoint scenario $\pmb{c}^m=(b/2,a)^T$, then we get solution $\pmb{x}^m=(0,1)^T$ with $Z(\pmb{x}^m)=a-\epsilon$. But the optimal first-stage solution is $\pmb{x}=(0,0)^T$ with $Z(\pmb{x})=\epsilon$. Hence the ratio $Z(\pmb{x}^m)/Z(\pmb{x})=(a-\epsilon)/\epsilon$ can be arbitrarily large.

\section{Mixed integer programming formulations}
\label{secmip}

In this section we construct mixed integer programming formulations for computing the maximum regret $Z(\pmb{x})$ of a given first-stage solution $\pmb{x}\in \mathcal{X}'$ and solving the \textsc{TStR}~$\mathcal{P}$ problem. In particular, we will show that for each $\pmb{x}\in \mathcal{X}'$, there exists a worst-case scenario which is extreme, i.e. which belongs to
 $\prod_{i\in [n]}\{\underline{c}_i, \overline{c}_i\}$. In the following, we will use $\underline{\pmb{c}}$ to denote the scenario $(\underline{c}_1,\dots,\underline{c}_n)^T$.
Fix $(\pmb{u},\pmb{v})\in \mathcal{Z}$ and define
$$Z_{(\pmb{u},\pmb{v})}(\pmb{x})=\pmb{C}^T\pmb{x}-\pmb{C}^T\pmb{u}+\max_{\pmb{c}\in \mathcal{U}} \min_{\pmb{y}\in \mathcal{R}(\pmb{x})} \pmb{c}^T(\pmb{y}-\pmb{v}).$$
It is easy to verify that
\begin{equation}
\label{zuv}
Z(\pmb{x})=\max_{(\pmb{u},\pmb{v})\in \mathcal{Z}} Z_{(\pmb{u},\pmb{v})}(\pmb{x}).
\end{equation}

\begin{prop}
\label{propwc1}
	It holds that
	\begin{equation}\label{eqprop1}
	Z_{(\pmb{u},\pmb{v})}(\pmb{x})=\pmb{C}^T\pmb{x}-\pmb{C}^T\pmb{u}-\underline{\pmb{c}}^T\pmb{v}+ \min_{\pmb{y}\in \mathcal{R}(\pmb{x})} \pmb{c}_{\pmb{v}}^{T}\pmb{y},
	\end{equation}
	where scenario $\pmb{c}_{\pmb{v}}\in \mathcal{U}$ is such that  $c_{\pmb{v}i}=\underline{c}_i$ if $v_i=1$ and $c_{\pmb{v}i}=\overline{c}_i$ if $v_i=0$.
\end{prop}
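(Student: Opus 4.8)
The plan is to isolate the only nontrivial part of $Z_{(\pmb{u},\pmb{v})}(\pmb{x})$, namely the max--min term $\max_{\pmb{c}\in \mathcal{U}} \min_{\pmb{y}\in \mathcal{R}(\pmb{x})} \pmb{c}^T(\pmb{y}-\pmb{v})$, since the leading terms $\pmb{C}^T\pmb{x}-\pmb{C}^T\pmb{u}$ do not involve $\pmb{c}$ or $\pmb{y}$ and simply carry over to~(\ref{eqprop1}). The first step is to split $\pmb{c}^T(\pmb{y}-\pmb{v})=\pmb{c}^T\pmb{y}-\pmb{c}^T\pmb{v}$, so that for a fixed scenario the inner minimum equals $g(\pmb{c}):=-\pmb{c}^T\pmb{v}+\min_{\pmb{y}\in \mathcal{R}(\pmb{x})}\pmb{c}^T\pmb{y}$. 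The goal then reduces to proving $\max_{\pmb{c}\in \mathcal{U}} g(\pmb{c})=g(\pmb{c}_{\pmb{v}})$, together with the identity $\pmb{c}_{\pmb{v}}^T\pmb{v}=\underline{\pmb{c}}^T\pmb{v}$; the latter is immediate, since $c_{\pmb{v}i}=\underline{c}_i$ whenever $v_i=1$ while the terms with $v_i=0$ vanish.

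Next I would establish $\max_{\pmb{c}\in \mathcal{U}} g(\pmb{c})=g(\pmb{c}_{\pmb{v}})$ by a direct comparison rather than by a vertex or duality argument. Here lies the crux: $g$ is concave in $\pmb{c}$ (a minimum of affine functions minus a linear term), and maximizing a concave function over the box $\mathcal{U}$ need not attain its optimum at an extreme point, so the assertion that an extreme scenario is worst-case is not automatic. Since $\pmb{c}_{\pmb{v}}\in \mathcal{U}$, one inequality is trivial. For the reverse, fix any $\pmb{c}\in \mathcal{U}$ and let $\pmb{y}^*\in \arg\min_{\pmb{y}\in \mathcal{R}(\pmb{x})}\pmb{c}_{\pmb{v}}^T\pmb{y}$; plugging $\pmb{y}^*$ into $g(\pmb{c})$ as a (possibly suboptimal) recourse gives $g(\pmb{c})\le \pmb{c}^T(\pmb{y}^*-\pmb{v})=\sum_{i\in[n]} c_i(y_i^*-v_i)$.

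The main work is then the coordinatewise comparison of $c_i(y_i^*-v_i)$ with $c_{\pmb{v}i}(y_i^*-v_i)$. Because $y_i^*,v_i\in\{0,1\}$, the factor $y_i^*-v_i\in\{-1,0,1\}$, and I would run through the sign patterns: when $v_i=0$ the factor is nonnegative and $c_{\pmb{v}i}=\overline{c}_i\ge c_i$, while when $v_i=1$ the factor is nonpositive and $c_{\pmb{v}i}=\underline{c}_i\le c_i$, so in both cases $c_i(y_i^*-v_i)\le c_{\pmb{v}i}(y_i^*-v_i)$, with equality whenever the factor is $0$. Summing over $i$ yields $g(\pmb{c})\le \pmb{c}_{\pmb{v}}^T(\pmb{y}^*-\pmb{v})=g(\pmb{c}_{\pmb{v}})$, the final equality holding because $\pmb{y}^*$ is optimal for $\pmb{c}_{\pmb{v}}$. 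Combining the two inequalities gives $\max_{\pmb{c}\in\mathcal{U}}g(\pmb{c})=g(\pmb{c}_{\pmb{v}})$, and substituting this value back, using $\pmb{c}_{\pmb{v}}^T\pmb{v}=\underline{\pmb{c}}^T\pmb{v}$, reproduces exactly~(\ref{eqprop1}). I expect the sign-tracking in these cases to be the only delicate point, since it is precisely the alignment between the sign of $y_i^*-v_i$ and the endpoint selected in $\pmb{c}_{\pmb{v}}$ (driven by nonnegativity of the costs and of the $y_i$) that forces an extreme scenario to be worst-case despite the concavity of $g$.
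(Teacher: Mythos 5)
Your proof is correct and rests on the same key observation as the paper's: since $y_i,v_i\in\{0,1\}$, the sign of $y_i-v_i$ is determined by $v_i$ alone, so the extreme scenario $\pmb{c}_{\pmb{v}}$ coordinatewise dominates every $\pmb{c}\in\mathcal{U}$ in each term $c_i(y_i-v_i)$. The only difference is one of bookkeeping: the paper states this dominance uniformly over all $\pmb{y}\in\mathcal{R}(\pmb{x})$ inside an epigraph reformulation of the max--min, whereas you route the comparison through the single optimal recourse $\pmb{y}^*$ for $\pmb{c}_{\pmb{v}}$ (and, usefully, you make explicit why concavity of the inner minimum does not obstruct optimality of an extreme scenario here).
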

\begin{proof}
	Write $Z_{(\pmb{u},\pmb{v})}(\pmb{x})=\pmb{C}^T\pmb{x}-\pmb{C}^T\pmb{u}+t^*$, where $t^*$ is computed by solving the following problem:
\begin{align}
\max \  & t \label{c03}\\ 
\text{s.t. } & t \le \pmb{c}^T(\pmb{y}-\pmb{v}) & \forall \pmb{y}\in \mathcal{R}(\pmb{x})\\
& c_i\in [\underline{c}_i,\overline{c}_i] & \forall i\in [n] \label{c04}
\end{align}
An optimal solution to~(\ref{c03})-(\ref{c04}) can be determined as follows: for each $i\in [n]$, if $v_i=1$, then $c_i=\underline{c}_i$ (because $y_i-v_i\leq 0$) and if $v_i=0$, then $c_i=\overline{c}_i$ (because $y_i-v_i\geq 0)$. This yields the scenario $\pmb{c}_{\pmb{v}}$. Observe also that $\pmb{c}_{\pmb{v}}^T(\pmb{y}-\pmb{v})=-\underline{\pmb{c}}^T\pmb{v}+\min_{\pmb{y}\in \mathcal{R}(\pmb{x})} \pmb{c}_{\pmb{v}}^{T}\pmb{y}$, which completes the proof.
\end{proof}
Proposition~\ref{propwc1} implies the following corollary:
\begin{cor} 
\label{wcsc}
For each $\pmb{x}\in \mathcal{X}'$, there is a worst-case extreme scenario $\pmb{c}\in \prod_{i\in [n]} \{\underline{c}_i,\overline{c}_i\}$.
\end{cor}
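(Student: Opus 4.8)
The plan is to combine the decomposition in~(\ref{zuv}) with the explicit formula of Proposition~\ref{propwc1}. First I would let $(\pmb{u}^*,\pmb{v}^*)\in\mathcal{Z}$ be a pair attaining the outer maximum in~(\ref{zuv}), so that $Z(\pmb{x})=Z_{(\pmb{u}^*,\pmb{v}^*)}(\pmb{x})$. By Proposition~\ref{propwc1} the inner maximization over $\pmb{c}\in\mathcal{U}$ that defines $Z_{(\pmb{u}^*,\pmb{v}^*)}(\pmb{x})$ is attained at the scenario $\pmb{c}_{\pmb{v}^*}$, and this scenario is extreme by construction, each coordinate being either $\underline{c}_i$ or $\overline{c}_i$. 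The goal is then to show that this particular extreme scenario is in fact a worst-case scenario for $\pmb{x}$ in the sense of~(\ref{defz}), i.e. that ${\rm Inc}(\pmb{x},\pmb{c}_{\pmb{v}^*})-{\rm Opt}(\pmb{c}_{\pmb{v}^*})=Z(\pmb{x})$.

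I would establish this equality by a two-sided inequality. The $\le$ direction is immediate: since $\pmb{c}_{\pmb{v}^*}\in\mathcal{U}$, the definition~(\ref{defz}) of $Z(\pmb{x})$ as a maximum over all scenarios already gives ${\rm Inc}(\pmb{x},\pmb{c}_{\pmb{v}^*})-{\rm Opt}(\pmb{c}_{\pmb{v}^*})\le Z(\pmb{x})$. For the $\ge$ direction I would start from formula~(\ref{eqprop1}), which reads $Z(\pmb{x})=\pmb{C}^T\pmb{x}-\pmb{C}^T\pmb{u}^*-\underline{\pmb{c}}^T\pmb{v}^*+\min_{\pmb{y}\in\mathcal{R}(\pmb{x})}\pmb{c}_{\pmb{v}^*}^T\pmb{y}$, and recognize that ${\rm Inc}(\pmb{x},\pmb{c}_{\pmb{v}^*})=\pmb{C}^T\pmb{x}+\min_{\pmb{y}\in\mathcal{R}(\pmb{x})}\pmb{c}_{\pmb{v}^*}^T\pmb{y}$. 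Since $(\pmb{u}^*,\pmb{v}^*)\in\mathcal{Z}$ is a feasible (though not necessarily optimal) solution of the two-stage problem under $\pmb{c}_{\pmb{v}^*}$, it yields ${\rm Opt}(\pmb{c}_{\pmb{v}^*})\le\pmb{C}^T\pmb{u}^*+\pmb{c}_{\pmb{v}^*}^T\pmb{v}^*=\pmb{C}^T\pmb{u}^*+\underline{\pmb{c}}^T\pmb{v}^*$, where the last equality holds because $c_{\pmb{v}^*i}=\underline{c}_i$ precisely on the support of $\pmb{v}^*$. Substituting this upper bound on ${\rm Opt}(\pmb{c}_{\pmb{v}^*})$ into the regret expression reproduces exactly the right-hand side of the formula for $Z(\pmb{x})$, so ${\rm Inc}(\pmb{x},\pmb{c}_{\pmb{v}^*})-{\rm Opt}(\pmb{c}_{\pmb{v}^*})\ge Z(\pmb{x})$, and the two bounds together force equality.

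The main subtlety, and the step I would watch most carefully, is that $(\pmb{u}^*,\pmb{v}^*)$ is chosen to optimize the regret decomposition~(\ref{zuv}) and is generally \emph{not} the optimal second-stage pair for ${\rm Opt}(\pmb{c}_{\pmb{v}^*})$; it supplies only the one-sided bound ${\rm Opt}(\pmb{c}_{\pmb{v}^*})\le\pmb{C}^T\pmb{u}^*+\underline{\pmb{c}}^T\pmb{v}^*$. The point is that this is exactly the direction required for the $\ge$ inequality, while the reverse inequality comes for free from the maximality of $Z(\pmb{x})$ over $\mathcal{U}$, so the possible suboptimality of $(\pmb{u}^*,\pmb{v}^*)$ for the two-stage problem causes no harm. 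Finally I would note explicitly that $\pmb{c}_{\pmb{v}^*}\in\prod_{i\in[n]}\{\underline{c}_i,\overline{c}_i\}$, which delivers the claimed extremality and completes the corollary.
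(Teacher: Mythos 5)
Your proof is correct and takes essentially the same route as the paper, whose entire argument is the one-line remark that Proposition~\ref{propwc1} together with the decomposition~(\ref{zuv}) implies the corollary. Your write-up merely makes explicit the details the paper leaves implicit, in particular the feasibility bound ${\rm Opt}(\pmb{c}_{\pmb{v}^*})\le\pmb{C}^T\pmb{u}^*+\underline{\pmb{c}}^T\pmb{v}^*$ needed to certify that the extreme scenario $\pmb{c}_{\pmb{v}^*}$ is genuinely a worst-case scenario in the sense of~(\ref{defz}).
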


The result stated in Corollary~\ref{wcsc} is analogous to the known result for the single-stage \textsc{SStR}~$\mathcal{P}$ problem (see, e.g.,~\cite{ABV09}).
However, in the two-stage model the worst-case scenario for $\pmb{x}\in \mathcal{X}'$ is not completely characterized by $\pmb{x}$. In order to compute $Z(\pmb{x})$ one needs to find  $(\pmb{u},\pmb{v})\in \mathcal{Z}$ maximizing the right-hand side of~(\ref{zuv}). In Section~\ref{secsp} we will show that the problem of computing $Z(\pmb{x})$ is NP-hard, when $\mathcal{P}$ is the \textsc{Shortest Path} problem. 
Using equality~(\ref{eqprop1}) from Proposition~\ref{propwc1},
we can compute the maximum regret of $\pmb{x}$ in the following way:
\begin{align*}
Z(\pmb{x})=\max\ &  \pmb{C}^T\pmb{x}-\pmb{C}^T\pmb{u}-\underline{\pmb{c}}^T\pmb{v}+z\\
\text{s.t. } & z \le  \pmb{c}_{\pmb{v}}^{T}\pmb{y} & \forall \pmb{y} \in \mathcal{R}(\pmb{x})  \\
	     & (\pmb{u}, \pmb{v})\in\mathcal{Z}
\end{align*}
which, by using the definition of $\pmb{c}_{\pmb{v}}$, can be stated equivalently as
\begin{align}
\max\ &  \pmb{C}^T\pmb{x}-\pmb{C}^T\pmb{u}-\underline{\pmb{c}}^T\pmb{v}+z \label{c08} \\
\text{s.t. } & z \le  \sum_{i\in [n]}(\underline{c}_iv_i + \overline{c}_i(1-v_i))y_i  & \forall \pmb{y} \in \mathcal{R}(\pmb{x})  \label{c09} \\
	     & (\pmb{u}, \pmb{v})\in\mathcal{Z} \label{c10}
\end{align}
The number of constraints in (\ref{c08})-(\ref{c10})  can be exponential in $n$.  This problem can be solved by standard row generation techniques. If the problem of optimizing a linear objective function over $\mathcal{R}(\pmb{x})$ can be written as a linear program, then it is also possible to find a compact reformulation of constraints~(\ref{c09}) 
using primal-dual relationships (see, e.g.,~\cite{PS98}). One such a case will be demonstrated in Section~\ref{secsel}. 
Let us now turn to the \textsc{TStR}~$\mathcal{P}$ problem. Again, using equality~(\ref{zuv}) we can express this problem as the following program:
\begin{align}
\min\ &  z \\
\text{s.t. } & z \ge  Z_{(\pmb{u},\pmb{v})}(\pmb{x}) & \forall (\pmb{u}, \pmb{v})\in\mathcal{Z} \label{c06} \\
	      & \pmb{x}\in \mathcal{X}'
\end{align}
Using Proposition~\ref{propwc1}, we can convert this model to
\begin{align*}
\min\ & \pmb{C}^T\pmb{x} + z \\
\text{s.t. } & z \ge \pmb{c}_{\pmb{v}}^{T}\pmb{y}^{\pmb{v}}- \pmb{C}^T\pmb{u}-\underline{\pmb{c}}^T\pmb{v} & \forall (\pmb{u}, \pmb{v})\in\mathcal{Z} \\
& \pmb{y}^{\pmb{v}} \in \mathcal{R}(\pmb{x}) &  \forall (\pmb{u}, \pmb{v})\in\mathcal{Z}\\
& \pmb{x}\in \mathcal{X}'
\end{align*}
Finally, making use of the definition of $\pmb{c}_{\pmb{v}}$ and $\mathcal{R}(\pmb{x})$, we get the following MIP formulation for \textsc{TStR}~$\mathcal{P}$:
\begin{align}
\min\ & \pmb{C}^T\pmb{x} + z \label{c12}\\
\text{s.t. } & z \ge \sum_{i\in [n]}(\underline{c}_iv_i + \overline{c}_i(1-v_i))y^{\pmb{v}}_i - \pmb{C}^T\pmb{u}-\underline{\pmb{c}}^T\pmb{v} & \forall (\pmb{u}, \pmb{v})\in\mathcal{Z} \\
& \pmb{y}^{\pmb{v}} \in \mathcal{R}(\pmb{x}) &  \forall (\pmb{u}, \pmb{v})\in\mathcal{Z}\\
& \pmb{x}\in \mathcal{X}' \label{c13}
\end{align}
The model~(\ref{c12})-(\ref{c13}) has an exponential number of variables and constraints. One can solve or approximate  it by using a row and column generation technique (see, e.g.,~\cite{ZZ13}). The idea is to solve~(\ref{c12})-(\ref{c13})  for some subset $\mathcal{Z}'\subseteq \mathcal{Z}$ obtaining a solution $\pmb{x}'\in \mathcal{X}'$,  together with a lower bound on the optimal objective value. The upper bound and the cut $(\pmb{u}',\pmb{v}')\in \mathcal{Z}$, which can be added to $\mathcal{Z}'$, can by computed by solving the formulation~(\ref{c08})-(\ref{c10}) for $\pmb{x}'$. Adding the cuts iteratively we can compute an exact or approximate solution to~\textsc{TStR}~$\mathcal{P}$. The efficiency of this method can depend on the structure of $\mathcal{X}$ and should be verified experimentally for each particular case.

\section{The shortest path problem}
\label{secsp}

In this section we deal with the case of $\textsc{TStR}~\mathcal{P}$, in which $\mathcal{P}$ is the \textsc{Shortest Path} problem. Let $G=(V,A)$ be a given network with two distinguished nodes $s\in V$ and $t\in V$. 
We will discuss two variants of the problem. In the first one, $\mathcal{X}_{\mathcal{P}}$ contains the characteristic vectors of all simple $s-t$ paths in $G$. In the second case, $\overline{\mathcal{X}}_{\mathcal{P}}$ contains all subsets of the arcs in which $s$ and $t$ are connected. Observe that $\mathcal{X}_{\mathcal{P}}\subseteq \overline{\mathcal{X}}_{\mathcal{P}}$ and in the deterministic case the problems with both sets are equivalent.
To see that the situation is different in the two-stage model, consider the sample instance shown in Figure~\ref{fig2}.
\begin{figure}[ht]
	\centering
	\includegraphics[height=4cm]{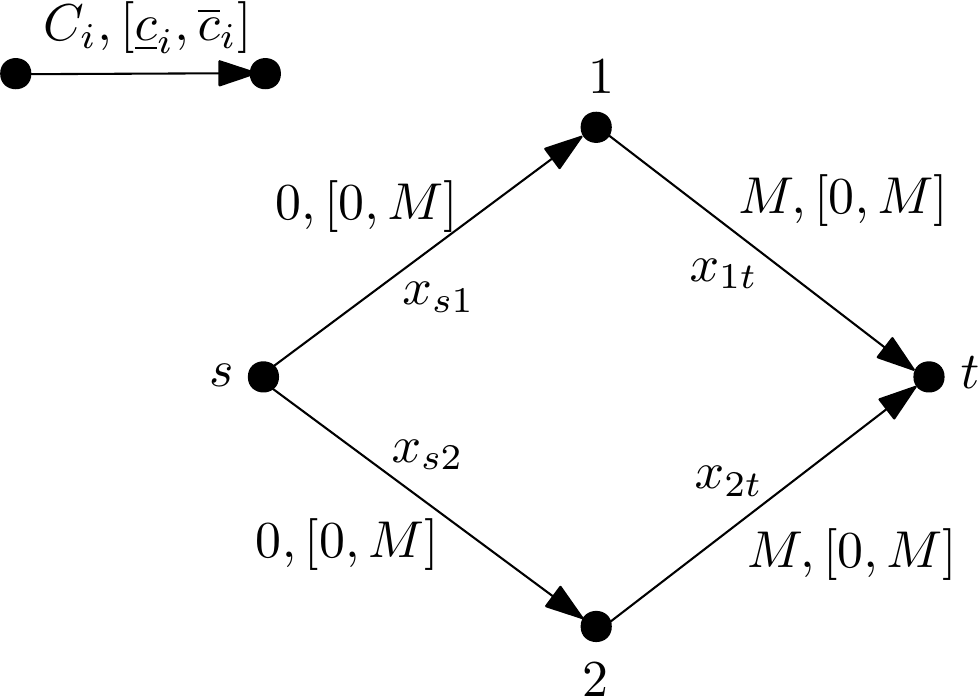}
	\caption{An instance of \textsc{TStR Shortest Path},
	where $M$ is a sufficiently large number.}\label{fig2}
\end{figure}
In the problem with set $\mathcal{X}_{\mathcal{P}}$ the maximum regret of each first-stage solution equals $M$. On the other hand, for the set $\overline{\mathcal{X}}_{\mathcal{P}}$ we can choose $\pmb{x}=(1,1,0,0)$, i.e. we can select the arcs $(s,1)$ and $(s,2)$ in the first stage. Then, depending on the second-stage scenario we can complete this solution by choosing $(1,t)$ or $(2,t)$. The maximum regret of $\pmb{x}$ is then~0. The example demonstrates that it can be profitable to select more arcs in the first stage, even if some of them are not ultimately used. The next theorem describes the 
computational complexity of \textsc{TStR Shortest Path}.

\begin{thm}\label{th:sphard}
The \textsc{TStR Shortest Path} problem with both $\mathcal{X}_{\mathcal{P}}$ and $\overline{\mathcal{X}}_{\mathcal{P}}$  is NP-hard.
\end{thm}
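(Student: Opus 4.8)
The plan is to prove NP-hardness by a polynomial reduction from a known NP-hard problem. Since the objective is to minimize maximum regret and the inner maximization already looks hard, a natural first target is to show that even \emph{evaluating} $Z(\pmb{x})$ is hard, and then to lift this to the optimization problem. Looking at the structure of $Z(\pmb{x})=\max_{(\pmb{u},\pmb{v})\in\mathcal{Z}}Z_{(\pmb{u},\pmb{v})}(\pmb{x})$, for a \emph{fixed} first-stage solution the worst case ranges over all pairs $(\pmb{u},\pmb{v})$, i.e.\ over all competitor paths $\pmb{u}+\pmb{v}$ together with a recourse completion $\pmb{y}$ of $\pmb{x}$ chosen adversarially-then-optimally. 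I would therefore aim the reduction at capturing a combinatorial conflict between the path that $\pmb{x}$ can be completed to and the path ${\rm Opt}(\pmb{c})$ achieves, which is exactly the kind of structure that encodes a disjunction/covering constraint.

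Concretely, I would reduce from a classical NP-complete problem such as \textsc{3SAT} (or, given the interval/regret flavor, possibly \textsc{Partition} or \textsc{Min-2-Sat}-type problems that are standard in single-stage minmax regret hardness proofs, cf.\ the cited \cite{AH04,Z04}). The plan is to build a layered network $G=(V,A)$ in which the first-stage choice $\pmm{x}$ selects a partial structure (for instance, one arc per variable-gadget, encoding a truth assignment), the adversary's scenario $\pmb{c}\in\mathcal{U}$ selects a clause to ``attack'' by pushing its second-stage costs to the upper bounds $\overline{c}_i$, and the recourse $\pmb{y}\in\mathcal{R}(\pmb{x})$ must repair the attacked clause using only the arcs compatible with the first-stage commitment. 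The big-$M$ trick illustrated in Figure~\ref{fig2} would be used to force the desired behavior: costs scaled so that satisfying all clauses keeps the regret below a threshold, while any unsatisfiable assignment incurs regret at least $M$.

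The two variants $\mathcal{X}_{\mathcal{P}}$ and $\overline{\mathcal{X}}_{\mathcal{P}}$ must both be handled, and the example around Figure~\ref{fig2} already signals that they behave differently, so I would design the gadget so that the reduction works for the \emph{connectivity} version $\overline{\mathcal{X}}_{\mathcal{P}}$ and then argue separately (or with a modified instance) for the \emph{simple-path} version $\mathcal{X}_{\mathcal{P}}$. For $\overline{\mathcal{X}}_{\mathcal{P}}$ the freedom to buy extra arcs in the first stage is what lets a good first-stage solution hedge against the adversary, so the construction should make hedging possible exactly when the \textsc{3SAT} instance is satisfiable. For $\mathcal{X}_{\mathcal{P}}$, where the final object must be a single simple path, the hedging is more restricted; I would either reuse the same graph with arc-multiplicities/parallel arcs that collapse the two notions, or give a second gadget. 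In both cases I would establish the equivalence ``$\textsc{3SAT}$ is satisfiable $\iff$ there exists $\pmb{x}$ with $Z(\pmb{x})$ below the threshold'' by proving the two implications via an explicit translation between truth assignments and first-stage solutions, invoking Corollary~\ref{wcsc} to restrict the adversary to extreme scenarios and thereby make the worst-case regret of each candidate $\pmb{x}$ explicitly computable.

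The main obstacle I anticipate is controlling the inner $\min$-$\max$-$\min$ structure: unlike the single-stage case, the worst-case scenario for $\pmb{x}$ is \emph{not} determined by $\pmb{x}$ alone (as the paper stresses), so I cannot simply read off the adversary's move from the first-stage commitment. The gadget must be engineered so that the adversary's optimal scenario and the corresponding recourse $\pmb{y}$ are forced into a canonical form for \emph{every} candidate $\pmb{x}$, otherwise the regret values will not match the intended threshold cleanly. Getting the cost magnitudes (the $M$, and any secondary scaling analogous to the $b\gg 2a\gg\varepsilon$ regime in Figure~\ref{figex1}b) to separate the ``satisfiable'' and ``unsatisfiable'' regret values without accidentally giving the adversary an unintended attack is the delicate part of the construction, and verifying both directions of the equivalence under the extreme-scenario characterization is where the bulk of the technical work will lie.
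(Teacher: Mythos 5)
Your proposal is a strategy outline, not a proof: it never constructs the reduction. An NP-hardness argument lives or dies by the concrete gadget, the cost values, the threshold, and the verification of both directions of the equivalence, and all four of these are absent. You name \textsc{3SAT} (with \textsc{Partition} as an aside) as the source problem, describe in general terms what a variable gadget and a clause ``attack'' ought to do, and then explicitly defer the hard part --- engineering the costs so that the adversary's scenario and the recourse are forced into canonical form, and proving the two implications --- to future work. That deferred part is precisely the theorem. Moreover, you correctly identify the central obstacle (the worst-case scenario is not determined by $\pmb{x}$ alone, and ${\rm Opt}(\pmb{c})$ adapts to the scenario), but you offer no mechanism to overcome it; without one, there is no way to check that your clause-attack scenarios yield regret values that cleanly separate satisfiable from unsatisfiable instances, nor that the adversary has no unintended better attack.

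For comparison, the paper's proof reduces from \textsc{Partition} and does all of this explicitly. It builds a chain of diamond gadgets in which the first stage must pick exactly one of two parallel arcs $p_i$ or $q_i$ per gadget (encoding $i\in I$ or $i\notin I$), the second stage must buy the matching completion arc $p_i'$ or $q_i'$, and an extra arc $r$ with cost interval reaching a large $M$ serves only to let the adversary punish degenerate first-stage solutions. Three preliminary claims pin every candidate $\pmb{x}$ with regret $\le 2b$ into this canonical form; then for such $\pmb{x}$ the worst-case scenario is computed in closed form, giving $Z(\pmb{x})=\max\bigl\{2b-\tfrac{1}{2}\sum_{i\in I}a_i,\; b+\tfrac{1}{2}\sum_{i\in I}a_i\bigr\}$, which is $\le\tfrac{3}{2}b$ iff $\sum_{i\in I}a_i=b$. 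Note that the adaptivity of ${\rm Opt}(\pmb{c})$ is not suppressed by a big-$M$ trick but exploited quantitatively: ${\rm Opt}(\pmb{c})=2nb+\min\{b,\sum_{i\in I}a_i\}$ is what produces the max of two opposing linear functions whose balance point is exactly the partition condition. The $\overline{\mathcal{X}}_{\mathcal{P}}$ variant is then settled by a short separate argument showing that buying more than $n$ of the arcs $p_i,q_i$ already forces regret $\ge 2b$ under $\underline{\pmb{c}}$, so hedging cannot help. If you want to salvage your plan, the most direct route is to abandon the \textsc{3SAT} framing (whose yes/no clause structure does not mesh naturally with the numeric regret threshold) and work out a \textsc{Partition}-style construction of this kind in full detail.
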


\begin{proof}

Consider the NP-hard \textsc{Partition} problem defined as follows~\cite{GJ79}. We are given a collection $a_1,\dots,a_n$ of positive integers, such that $\sum_{i\in [n]} a_i=2b$. We ask if there is a subset $I\subseteq [n]$ such that $\sum_{i\in I} a_i=b$. Given an instance of \textsc{Partition} we build the graph shown in Figure~\ref{fig1}. For each arc we specify the first-stage cost and the second-stage cost interval, where $M>2nb+2b$ is a sufficiently large constant (see Figure~\ref{fig1}).  We show that the answer to \textsc{Partition} is yes if and only if there is a first-stage solution $\pmb{x}$ with the maximum regret at most $\frac{3}{2}b$.  We first focus on the set $\mathcal{X}_{\mathcal{P}}$.
To this end, we prove the following three claims:

\begin{figure}[htb]
\centering
\includegraphics[width=\textwidth]{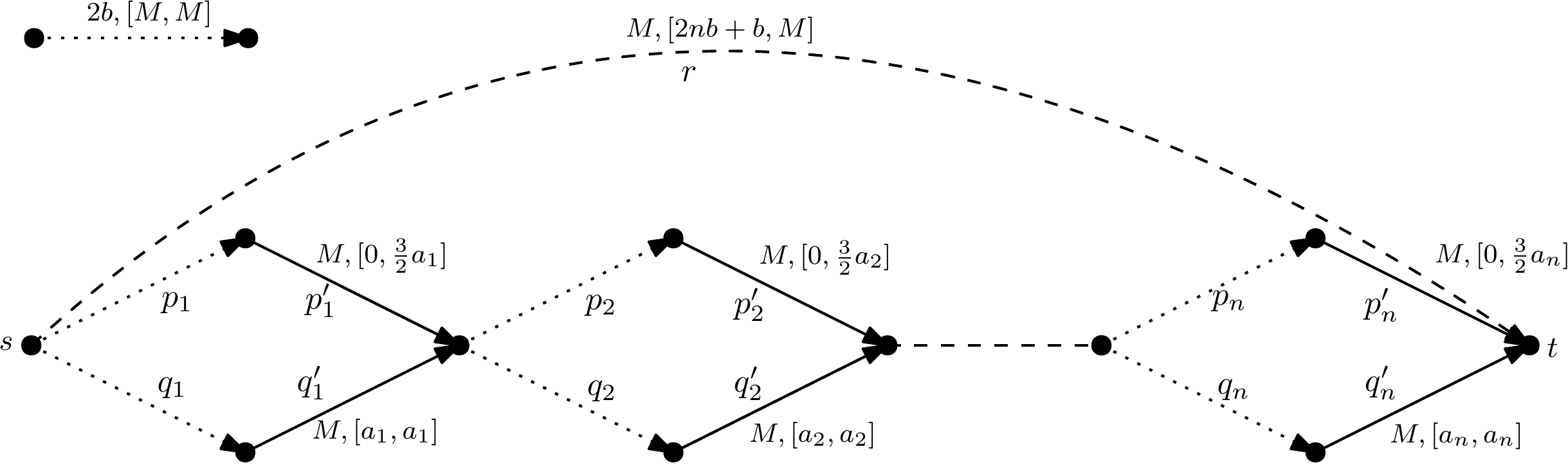}
\caption{The instance of \textsc{TStR Shortest Path} from the proof of Theorem~\ref{th:sphard}.}\label{fig1}
\end{figure}

\begin{enumerate}
\item \emph{There is an optimal first-stage  solution with the maximum regret at most $2b$.} Let $\pmb{x}$ be the first-stage solution in which all arcs $q_1,\dots,q_n$ are selected. Under any scenario, the optimal recourse action for $\pmb{x}$ selects the arcs $q_1',\dots,q_n'$. In the worst-case scenario $\underline{\pmb{c}}$ the second-stage costs of all arcs are set to their lower bounds. We get ${\rm Inc}(\pmb{x},\underline{\pmb{c}})=2nb+2b$,  ${\rm Opt}(\underline{\pmb{c}})=2nb$ and $Z(\pmb{x})=2b$.

\item \emph{The first-stage solution $\pmb{x}$, in which no arc is chosen in the first stage or at least one of the arcs among $r, p_i', q_i'$, $i\in [n]$, is chosen in the first stage is not optimal.} Define scenario $\pmb{c}_1$ in which the second-stage cost of arc $r$ is $M$ and the second-stage costs of  the arcs $p_1',\dots,p_n'$ are 0.  Observe that ${\rm Opt}(\pmb{c}_1)=2nb$. It is easy to see that ${\rm Inc}(\pmb{x},\pmb{c}_1)\geq M$.  Hence $Z(\pmb{x})\geq M-2nb>2b$ and $\pmb{x}$ is not optimal, according to point 1.

\item \emph{Any optimal first-stage solution $\pmb{x}$ selects exactly one of $p_i$ or $q_i$ for each $i\in [n]$}.  According to point 2. at least one of the arcs among $p_i, q_i$, $i\in [n]$, must be selected by $\pmb{x}$. Assume there is $k\in [n]$ such that both $p_k$ and $q_k$ are not selected in $\pmb{x}$. Consequently, we must choose $p_k$ or $q_k$ in the second stage which implies ${\rm Inc}(\pmb{x},\underline{\pmb{c}})\geq M$. Since ${\rm Opt}(\underline{\pmb{c}})=2nb$, we get $Z(\pmb{x})\geq M-2nb>2b$ and $\pmb{x}$ is not optimal, according to point 1.
  \end{enumerate}

Let $I\subseteq [n]$ be the set of indices of the arcs $p_i$ selected in the first stage. Notice that $[n]\setminus I$ is the set of indices of the arcs $q_i$ selected in the first stage. If arc $p_i$ is chosen in the first stage, then $p'_i$ must be chosen in the second state (the same is true for arcs $q_i$).  In the worst-case scenario $\pmb{c}$ we fix the cost of $r$ to $2nb+b$ and the cost of $p_i'$ to $\frac{3}{2}a_i$ if $i\in I$ and to~0, otherwise. We get ${\rm Inc}(\pmb{x}, \pmb{c})=2nb+\sum_{i\in I} \frac{3}{2}a_i+\sum_{i\in [n]\setminus I} a_i$  and ${\rm Opt}(\pmb{c})=2nb+\min\{b, \sum_{i\in I} a_i\}$. The maximum regret of the formed path $\pmb{x}$ is then
\begin{align*}
Z(\pmb{x})=&\max\left\{\sum_{i\in [n]\setminus I} a_i+\sum_{i\in I} \frac{3}{2}a_i-\sum_{i\in I} a_i, \sum_{i\in [n]\setminus I} a_i+\sum_{i\in I} \frac{3}{2}a_i-b\right\} \\
=&\max\left\{2b-\frac{1}{2}\sum_{i\in I} a_i, b+\frac{1}{2}\sum_{i\in I} a_i \right\}.
\end{align*}
We now can see that the maximum regret  of the formed path is at most $\frac{3}{2}b$ if and only if 
$\frac{1}{2}\sum_{i\in I} a_i=\frac{1}{2}b$, i.e. the answer to \textsc{Partition} is yes. 

Let us now turn to the case of $\overline{\mathcal{X}}_{\mathcal{P}}$. Because $\mathcal{X}_{\mathcal{P}}\subseteq \overline{\mathcal{X}}_{\mathcal{P}}$ the positive answer to the \textsc{Partition} problem implies that there is a first-stage solution $\pmb{x}\in \overline{\mathcal{X}}_{\mathcal{P}}$ such that $Z(\pmb{x})\leq \frac{3}{2}b$. It remains to show the converse implication, i.e. if there is a first-stage solution $\pmb{x}\in \overline{\mathcal{X}}_{\mathcal{P}}$ such that $Z(\pmb{x})\leq \frac{3}{2}b$, then the answer to \textsc{Partition} is yes. In the case of $\overline{\mathcal{X}}_{\mathcal{P}}$ any subset of arcs in the formed network $G$ is allowed to be selected in the first-stage. Similarly to the previous case, selecting any arc among $r, p_i', q_i'$, $i\in [n]$,  in the first stage yields a solution $\pmb{x}$ such that $Z(\pmb{x})>2b$. Hence, it may only be profitable to choose more than $n$ arcs among $p_i, q_i$, $i\in [n]$ in the first stage. Let $\pmb{x}$ be any such a solution. Under scenario $\underline{\pmb{c}}$, we get ${\rm Inc}(\pmb{x},\underline{\pmb{c}})\geq 2nb+2b$, while ${\rm Opt}(\underline{\pmb{c}})=2nb$. Hence $Z(\pmb{x})\geq 2b$, a contradiction with the assumption that $Z(\pmb{x})\leq \frac{3}{2}b$.

\end{proof}

 \begin{thm}\label{th:regr}
Computing the maximum regret $Z(\pmb{x})$ for a given $\pmb{x}\in \mathcal{X}'$ is NP-hard for both $\mathcal{X}_{\mathcal{P}}$ and $\overline{\mathcal{X}}_{\mathcal{P}}$.
\end{thm}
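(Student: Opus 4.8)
The plan is to reduce from the NP-hard \textsc{Partition} problem already used in Theorem~\ref{th:sphard}: given positive integers $a_1,\dots,a_n$ with $\sum_{i\in[n]}a_i=2b$, decide whether some $I\subseteq[n]$ satisfies $\sum_{i\in I}a_i=b$. The governing idea is to build a network together with a single fixed first-stage solution $\pmb{x}$ for which computing $Z(\pmb{x})$ is equivalent to solving \textsc{Partition}. I would take the trivial first-stage solution $\pmb{x}=\pmb{0}$, so that ${\rm Inc}(\pmb{0},\pmb{c})=\min_{\pmb{y}\in\mathcal{R}(\pmb{0})}\pmb{c}^T\pmb{y}$ is a purely second-stage shortest-completion cost, whereas ${\rm Opt}(\pmb{c})$ is still free to commit arcs in the first stage. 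The whole difficulty is then pushed into exploiting this asymmetry between a completion that must be paid at second-stage costs and an optimum that may pre-commit arcs cheaply.

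Concretely, I would build two chains from $s$ to $t$, vertex-disjoint except at $s,t$: a top chain with arcs $e_1,\dots,e_n$ in series and a bottom chain with arcs $f_1,\dots,f_n$, where $e_i$ and $f_i$ both have second-stage interval $[0,a_i]$ and a prohibitively large first-stage cost $M$. I would then add zero-first-stage-cost connector arcs joining the $i$-th nodes of the two chains, but give them a prohibitive second-stage cost $M$. With this gadget a completion of $\pmb{0}$ pays only second-stage costs, cannot afford the connectors, and is forced to run entirely along one chain; hence ${\rm Inc}(\pmb{0},\pmb{c})=\min\{\sum_i c_{e_i},\sum_i c_{f_i}\}$, a "block" minimum. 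The optimum ${\rm Opt}(\pmb{c})$, by contrast, may commit the connectors for free in the first stage and then, level by level, buy the cheaper of $e_i,f_i$ in the second stage, zig-zagging between the chains; this gives ${\rm Opt}(\pmb{c})=\sum_i\min\{c_{e_i},c_{f_i}\}$, a "per-level" minimum. The big-$M$ costs (chosen larger than $2b$) are precisely what forbid the degenerate moves, namely using a connector inside a completion or committing a chain arc in the optimum.

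By Corollary~\ref{wcsc} it suffices to maximise the regret over extreme scenarios with $c_{e_i},c_{f_i}\in\{0,a_i\}$. Writing $\delta_i=c_{e_i}-c_{f_i}$, the regret simplifies to $\min\{\sum_i c_{e_i},\sum_i c_{f_i}\}-\sum_i\min\{c_{e_i},c_{f_i}\}=\tfrac12\big(\sum_i|\delta_i|-\big|\sum_i\delta_i\big|\big)$. Maximising this sets $|\delta_i|=a_i$ for every $i$ and then chooses signs to make the signed sum as small as possible, giving $Z(\pmb{0})=b-\tfrac12\min_{\sigma\in\{\pm1\}^n}\big|\sum_i\sigma_i a_i\big|$. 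This equals $b$ exactly when the signed sum can be driven to $0$, i.e. when \textsc{Partition} is a yes-instance, and is strictly smaller (by at least $1$, using integrality) otherwise. Hence any algorithm computing $Z(\pmb{0})$ decides \textsc{Partition}, which proves NP-hardness.

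The step I expect to be the main obstacle is engineering exactly this block-versus-per-level asymmetry and verifying that no off-pattern scenario beats the balanced one; the first-stage-only connectors are the crucial device, and establishing both the forced-block behaviour of ${\rm Inc}$ and the free-mixing behaviour of ${\rm Opt}$ is what needs careful checking. Finally, since the same instance is analysed identically for simple paths and for connected subgraphs—a completion of $\pmb{0}$ must still contain a full top or bottom chain, and ${\rm Opt}$ may still commit all connectors for free—the argument yields NP-hardness simultaneously for $\mathcal{X}_{\mathcal{P}}$ and $\overline{\mathcal{X}}_{\mathcal{P}}$.
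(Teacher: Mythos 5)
Your proposal is correct and, at the strategic level, follows the same route as the paper: reduce from \textsc{Partition}, fix the trivial first-stage solution $\pmb{x}=\pmb{0}$, restrict to extreme scenarios via Corollary~\ref{wcsc} (indexed by a set $I\subseteq[n]$ of raised costs), and observe that the maximum regret becomes $\max_{I\subseteq[n]}\min\{\sum_{i\in I}a_i,\sum_{i\in[n]\setminus I}a_i\}$, which attains $b$ exactly on yes-instances. The gadget, however, is genuinely different. The paper uses two vertex-disjoint $s$--$t$ chains: the arcs of $P_1$ have first-stage cost $a_i$ and interval $[0,2a_i]$, those of $P_2$ have fixed cost $a_i$; there ${\rm Opt}$ \emph{hedges} against the raised arcs by pre-buying them at half price, so ${\rm Opt}(\pmb{c})=\sum_{i\in I}a_i$ and the regret is $\min\{2b,\,2\sum_{i\in I}a_i\}-\sum_{i\in I}a_i$. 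You instead build a ladder with zero-first-stage/expensive-second-stage connectors, so that ${\rm Opt}$ \emph{dodges} the raised arcs entirely (paying $\sum_i\min\{c_{e_i},c_{f_i}\}=0$ on the relevant scenarios) while ${\rm Inc}$ is locked onto one full chain; your identity $\tfrac12\bigl(\sum_i|\delta_i|-\bigl|\sum_i\delta_i\bigr|\bigr)$ packages the same final optimization cleanly. Both work; the paper's network is topologically simpler (no connectors, no layering), while yours makes the block-versus-per-level asymmetry symmetric and transparent. One detail you must repair: for the simple-path variant $\mathcal{X}_{\mathcal{P}}$, ${\rm Opt}$ cannot ``commit all connectors'' in the first stage, since $\pmb{x}+\pmb{y}$ must be \emph{exactly} a simple path, so a first-stage solution containing all connectors has $\mathcal{R}(\pmb{x})=\emptyset$; it must commit precisely the connectors lying on its chosen zig-zag path (still at cost $0$, so ${\rm Opt}(\pmb{c})$ and hence the whole argument are unchanged). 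With that wording fixed, and with connectors present in both directions so the zig-zag can always reach the cheaper of $e_{i+1},f_{i+1}$, your proof is valid for both $\mathcal{X}_{\mathcal{P}}$ and $\overline{\mathcal{X}}_{\mathcal{P}}$.
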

\begin{proof}
Given again an instance of the \textsc{Partition} problem (see the proof of Theorem~\ref{th:sphard}).
We construct a network $G=(V,A)$ consisting of two disjoint paths from $s$ to $t$, $P_1$ and $P_2$, each with
 with $n$ arcs. The network $G$ with the corresponding first-stage costs and the second-stage cost intervals is shown in Figure~\ref{fig:eval}.
 Set $\pmb{x}=\pmb{0}$, so no arc is allowed to be selected in the first stage. We will show that the answer \textsc{Partition} is yes if and only if $Z(\pmb{x})\geq b$.

\begin{figure}[htb]
\begin{center}
\includegraphics[width=.8\textwidth]{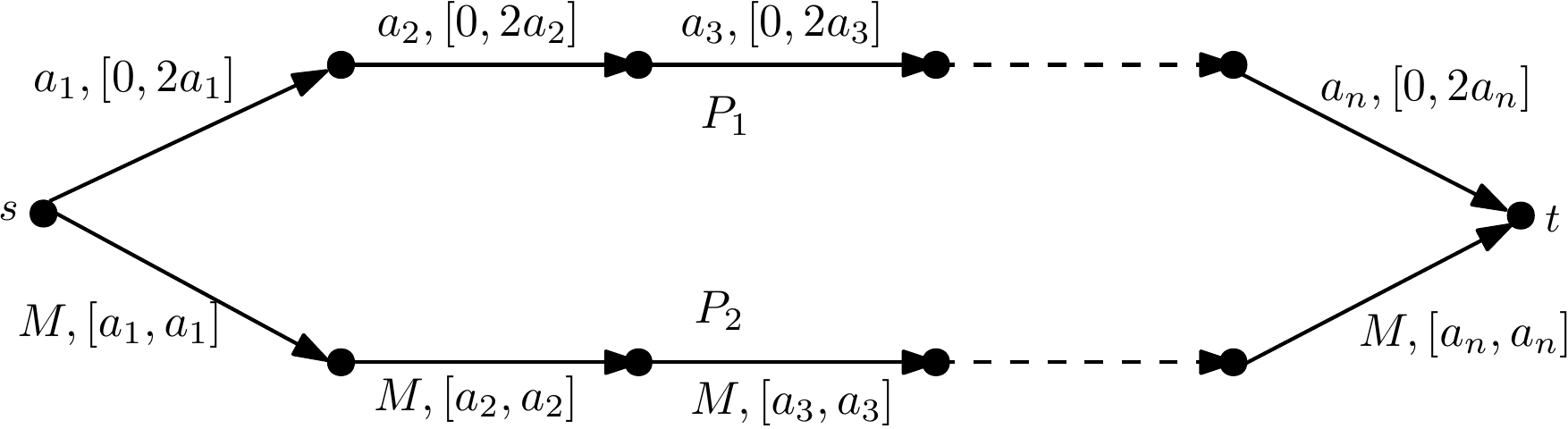}
\caption{The instance of \textsc{TStR Shortest Path} from the proof of Theorem~\ref{th:regr}.}\label{fig:eval}
\end{center}
\end{figure}

Let $\pmb{c}$ be a worst-scenario for $\pmb{x}$. According to Corollary~\ref{wcsc}, we can assume that $\pmb{c}$ is an extreme scenario. Let $I$ be the set of indices of the arcs in $P_1$ whose second-stage costs are set to the upper bounds. The value of  ${\rm Inc}(\pmb{x},\pmb{c})$ is the minimum of $\sum_{i\in [n]} a_i$ (path $P_2$ is used as the best recourse action) and $\sum_{i\in I} 2a_i$ (path $P_1$ is used as the best recourse action). On the other hand ${\rm Opt}(\pmb{c})=\sum_{i\in I} a_i$, because the optimal two-stage path is $P_1$, where the arcs with indices in $I$ are selected in the first-stage. Hence
$$Z(\pmb{x})=\min\left\{\sum_{i\in [n]} a_i, \sum_{i\in I} 2a_i\right\}-\sum_{i\in I} a_i=\min\left\{\sum_{i\in [n]\setminus I} a_i, \sum_{i\in [n]} a_i\right\}.$$
Therefore, $Z(\pmb{x})\geq b$ if and only if the answer to \textsc{Partition} problem is yes.
\end{proof}

In the following we analyze the computational complexity of the incremental and two-stage variants of the problem. 

\begin{obs}
The \textsc{Inc Shortest Path} problem with $\overline{\X}_{\mathcal{P}}$ can be solved in polynomial time.
\end{obs}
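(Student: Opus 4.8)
The plan is to reduce the incremental problem to an ordinary shortest-path computation with modified arc costs. First I would fix a first-stage solution $\pmb{x}\in \mathcal{X}'$ and a scenario $\pmb{c}\in\mathcal{U}$, and set $A_{\pmb{x}}=\{i\in[n]:x_i=1\}$, the arcs already bought in the first stage. Recall that any $\pmb{y}\in\mathcal{R}(\pmb{x})$ satisfies $y_i=0$ for $i\in A_{\pmb{x}}$ (so that $\pmb{x}+\pmb{y}$ stays a $0$-$1$ vector), and that membership $\pmb{x}+\pmb{y}\in\overline{\mathcal{X}}_{\mathcal{P}}$ means exactly that the arc set $A_{\pmb{x}}\cup\{i:y_i=1\}$ connects $s$ and $t$ in $G$. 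Hence the recourse subproblem $\min_{\pmb{y}\in\mathcal{R}(\pmb{x})}\pmb{c}^T\pmb{y}$ asks for the cheapest set of \emph{new} arcs whose addition to $A_{\pmb{x}}$ makes $s$ and $t$ connected.

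Next I would introduce the modified costs $c_i'=0$ for $i\in A_{\pmb{x}}$ and $c_i'=c_i$ for $i\notin A_{\pmb{x}}$, and prove that $\min_{\pmb{y}\in\mathcal{R}(\pmb{x})}\pmb{c}^T\pmb{y}$ equals the length $d^*$ of a shortest $s$-$t$ path in $G$ with respect to $\pmb{c}'$. The inequality ``$\le$'' follows by taking a shortest path $P$ attaining $d^*$ and letting $\pmb{y}$ be the characteristic vector of $P\setminus A_{\pmb{x}}$: this $\pmb{y}$ is feasible and has cost $\sum_{i\in P}c_i'=d^*$. For ``$\ge$'', given any feasible $\pmb{y}$ the set $A_{\pmb{x}}\cup\{i:y_i=1\}$ contains an $s$-$t$ path $P'$, and every arc of $P'$ lying outside $A_{\pmb{x}}$ belongs to $\{i:y_i=1\}$; since $\pmb{c}\ge\pmb{0}$ we get $\pmb{c}^T\pmb{y}\ge\sum_{i\in P'\setminus A_{\pmb{x}}}c_i=\sum_{i\in P'}c_i'\ge d^*$. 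Combining both bounds yields $\mathrm{Inc}(\pmb{x},\pmb{c})=\pmb{C}^T\pmb{x}+d^*$.

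Finally, because $\pmb{c}\ge\pmb{0}$ implies $\pmb{c}'\ge\pmb{0}$, the value $d^*$ can be computed in polynomial time by any shortest-path algorithm for nonnegative weights (for instance Dijkstra's algorithm), and adding $\pmb{C}^T\pmb{x}$ is trivial. This establishes that $\mathrm{Inc}(\pmb{x},\pmb{c})$ is computable in polynomial time, as claimed.

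The only step requiring care is the ``$\ge$'' direction, where one must rule out that the extra freedom of $\overline{\mathcal{X}}_{\mathcal{P}}$ (arbitrary $s$-$t$ connected subsets rather than simple paths) could produce a cheaper completion. This is precisely where nonnegativity of the second-stage costs is used: discarding the arcs of $A_{\pmb{x}}\cup\{i:y_i=1\}$ that are not on the chosen path $P'$ can only lower the cost, so restricting attention to a single shortest path loses nothing.
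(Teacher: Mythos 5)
Your proposal is correct and follows essentially the same approach as the paper: zero out the costs of the arcs selected by $\pmb{x}$, compute a shortest $s$-$t$ path under the modified costs, and take as recourse action the arcs of that path not already in $\pmb{x}$. The only difference is that you spell out both inequality directions (in particular, using nonnegativity of $\pmb{c}$ to discard unused arcs), which the paper leaves implicit.
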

\begin{proof}
Given network $G=(V,A)$, a first-stage solution $\pmb{x}$ and a second-stage cost scenario $\pmb{c}$, consider a deterministic \textsc{Shortest Path} problem in the same network $G$, in which the costs of the arcs selected in $\pmb{x}$ are~0 and the costs of the remaining arcs are determined according to $\pmb{c}$. We seek a shortest $s-t$ path in $G$. The optimal recourse action $\pmb{y}\in \mathcal{R}(\pmb{x})$ selects the arcs on the path computed, which are not selected by $\pmb{x}$. Notice that $\pmb{x}+\pmb{y}$ needs not to describe a simple $s-t$ path in $G$.
\end{proof}

\begin{obs}
	The \textsc{TSt Shortest Path} problem with both $\overline{\X}_{\mathcal{P}}$ and $\X_{\mathcal{P}}$ can be solved in polynomial time.
\end{obs}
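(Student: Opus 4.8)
The plan is to reduce the two-stage problem, for the given fixed scenario $\pmb{c}$, to a single ordinary \textsc{Shortest Path} computation. The essential observation is that once $\pmb{c}$ is fixed no uncertainty remains, so the decision of which arcs to purchase in the first stage (at cost $C_a$) and which to defer to the second stage (at cost $c_a$) decouples completely across arcs. I would therefore define modified arc costs $w_a=\min\{C_a,c_a\}$ for every $a\in A$, noting that $w_a\ge 0$ because $\pmb{C},\pmb{c}\in\Rset_+^n$, and compute a shortest $s$--$t$ path $P^*$ with respect to $w$; this takes polynomial time, e.g. by Dijkstra's algorithm.

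First I would handle $\X_{\mathcal{P}}$. For any $(\pmb{x},\pmb{y})\in\mathcal{Z}$ the vector $\pmb{x}+\pmb{y}$ is the characteristic vector of a simple $s$--$t$ path $P$, and since $\pmb{x},\pmb{y}$ are $0$--$1$ vectors whose sum is again $0$--$1$, their supports are disjoint. Hence $\pmb{C}^T\pmb{x}+\pmb{c}^T\pmb{y}=\sum_{a\in P,\,x_a=1}C_a+\sum_{a\in P,\,y_a=1}c_a\ge\sum_{a\in P}w_a\ge\sum_{a\in P^*}w_a$, giving a lower bound of $\sum_{a\in P^*}w_a$ on ${\rm Opt}(\pmb{c})$. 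Conversely, from $P^*$ I would build a feasible pair attaining this value: assign each arc $a\in P^*$ to the first stage if $C_a\le c_a$ and to the second stage otherwise, obtaining $(\pmb{x}^*,\pmb{y}^*)\in\mathcal{Z}$ of cost exactly $\sum_{a\in P^*}w_a$. Thus ${\rm Opt}(\pmb{c})=\sum_{a\in P^*}w_a$.

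For $\overline{\X}_{\mathcal{P}}$ the only additional point is that feasible solutions may contain redundant arcs: $\pmb{x}+\pmb{y}$ need only connect $s$ and $t$, so its support contains some simple $s$--$t$ path $P$. Because all costs are nonnegative, dropping every arc outside $P$ cannot increase the objective, and the previous per-arc bound applies to $P$, yielding $\pmb{C}^T\pmb{x}+\pmb{c}^T\pmb{y}\ge\sum_{a\in P}w_a\ge\sum_{a\in P^*}w_a$. The pair $(\pmb{x}^*,\pmb{y}^*)$ constructed from $P^*$ lies in $\mathcal{Z}$ for $\overline{\X}_{\mathcal{P}}$ as well, since $\X_{\mathcal{P}}\subseteq\overline{\X}_{\mathcal{P}}$, so again ${\rm Opt}(\pmb{c})=\sum_{a\in P^*}w_a$.

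I do not expect a genuine obstacle here: the whole argument rests on the decoupling observation and on the nonnegativity of the costs, which also guarantees that the modified shortest path instance is itself polynomially solvable. The one place requiring a word of care is the $\overline{\X}_{\mathcal{P}}$ case, where one must argue explicitly that selecting arcs beyond a single $s$--$t$ path is never beneficial — in contrast to the \textsc{TStR} version, where Figure~\ref{fig2} shows that buying redundant arcs in the first stage can strictly help.
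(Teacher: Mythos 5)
Your proposal is correct and is essentially the same argument as the paper's: both define the modified arc costs $\min\{C_a,c_a\}$, solve one deterministic \textsc{Shortest Path} instance, and split the arcs of the resulting path between the two stages according to which cost attains the minimum. The only difference is that you spell out the optimality verification (the per-arc lower bound via disjoint supports, and the reduction of an $\overline{\X}_{\mathcal{P}}$-solution to a simple path using nonnegativity of costs) that the paper leaves as ``one can easily verify.''
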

\begin{proof}
Given a network $G=(V,A)$  in which $C_i$ is the first-stage cost and $c_i$ is the second-stage cost of the arc $a_i\in A$, consider a deterministic \textsc{Shortest Path} problem in the same network $G$, in which the cost of the arc $a_i$ is $\hat{c}_i=\min\{C_i, c_i\}$. Let $P$ be the shortest $s-t$ path in $G$ with the arc costs $\hat{c}_i$. We form  solution $(\pmb{x},\pmb{y})\in \mathcal{Z}$ as follows: for each $a_i\in P$, if $\hat{c}_i=C_i$, then $x_i=1$; if $\hat{c}_i<C_i$, then $y_i=1$; for each $a_i\notin P$, $x_i=y_i=0$. One can easily verify that $(\pmb{x},\pmb{y})$ is an optimal solution to \textsc{TSt Shortest Path} regardless of which set $\overline{\X}_{\mathcal{P}}$ or $\X_{\mathcal{P}}$ is used.
\end{proof}

The next result shows a difference between the two problem variants using $\X_{\mathcal{P}}$ and $\overline{\X}_{\mathcal{P}}$.

\begin{thm}\label{th:sp-inner}
The \textsc{Inc Shortest Path} problem with $\X_P$ is strongly NP-hard and not at all approximable if P$\neq$NP.
\end{thm}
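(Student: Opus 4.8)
The plan is to reduce from the \textsc{Directed Hamiltonian Path} problem with fixed endpoints, which is strongly NP-complete and involves no large numbers~\cite{GJ79}. The crucial feature distinguishing $\X_{\mathcal{P}}$ from $\overline{\X}_{\mathcal{P}}$ (already illustrated by the instance of Figure~\ref{fig2}) is that, for $\X_{\mathcal{P}}$, every recourse action in $\mathcal{R}(\pmb{x})$ must turn the union $\pmb{x}+\pmb{y}$ into a \emph{single simple} $s$-$t$ path; hence all first-stage arcs become genuinely mandatory arcs that a simple path is forced to contain, and asking for a cheap completion encodes the Hamiltonian-like task of routing one simple path through all of them. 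To obtain inapproximability rather than mere NP-hardness, I would engineer the second-stage costs so that $\min_{\pmb{y}\in\mathcal{R}(\pmb{x})}\pmb{c}^T\pmb{y}$ equals $0$ exactly on yes-instances and is a strictly positive integer on no-instances, while keeping the instance feasible. Any algorithm with a finite approximation guarantee $\alpha$ would then be forced to return $0$ on every yes-instance (since $\alpha\cdot 0=0$) and a positive value on every no-instance, thereby deciding Hamiltonicity in polynomial time.

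Concretely, given a directed graph $H$ with prescribed endpoints $s_H,t_H$, I would split each vertex $v$ into $v_{\mathrm{in}},v_{\mathrm{out}}$ joined by a \emph{mandatory arc} $m_v=(v_{\mathrm{in}},v_{\mathrm{out}})$, and place all arcs $m_v$ in the first stage by setting $x_{m_v}=1$. I arrange the only arc leaving $v_{\mathrm{in}}$ to be $m_v$ and the only arc entering $v_{\mathrm{out}}$ to be $m_v$, so that any simple path using $m_v$ traverses the gadget of $v$ exactly once. For every original arc $(u,v)$ of $H$ I add a \emph{real} connecting arc $(u_{\mathrm{out}},v_{\mathrm{in}})$ with second-stage cost $0$, and for every ordered pair $(u,v)$ that is \emph{not} an arc of $H$ I add a \emph{shortcut} connecting arc $(u_{\mathrm{out}},v_{\mathrm{in}})$ with second-stage cost $1$. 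Finally I attach a global source $s$ with a single arc into $(s_H)_{\mathrm{in}}$ and a global sink $t$ reached by a single arc from $(t_H)_{\mathrm{out}}$, both of second-stage cost $0$; all first-stage costs may be set to $0$. Because the real and shortcut arcs together join every ordered pair of gadgets, the $n$ gadgets can be linked in any order, so a feasible completion always exists and $\pmb{x}\in\mathcal{X}'$ holds by construction.

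It then remains to verify the equivalence. Since every mandatory arc must appear in $\pmb{x}+\pmb{y}$, and the forced in/out degrees make each $m_v$ enterable and leavable in only one way, any completion $\pmb{y}$ corresponds to an ordering of all $n$ vertices into a simple $s_H$-$t_H$ path in which consecutive vertices are joined either by a real arc (cost $0$) or by a shortcut (cost $1$). Thus $\min_{\pmb{y}\in\mathcal{R}(\pmb{x})}\pmb{c}^T\pmb{y}$ equals the minimum number of non-edges used over all Hamiltonian orderings, which is $0$ iff $H$ has a Hamiltonian $s_H$-$t_H$ path and is a positive integer otherwise. As all costs lie in $\{0,1\}$ and the reduction is polynomial, this yields strong NP-hardness, and the $0$/positive gap gives that no polynomial-time algorithm attains any finite approximation ratio unless $\mathrm{P}=\mathrm{NP}$. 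The main obstacle I anticipate is the bookkeeping that simultaneously (i) keeps $\pmb{x}$ feasible through the shortcut arcs and (ii) preserves the exact correspondence between zero-cost completions and Hamiltonicity; in particular one must argue carefully that the gadget degrees prevent any cheap completion from skipping a gadget or reusing a vertex, so that the value $0$ is attainable precisely when a Hamiltonian $s_H$-$t_H$ path exists. The same construction, read verbatim, also proves the claim for $\overline{\X}_{\mathcal{P}}$ once one checks that the degree constraints make the relaxed feasible set coincide with the simple-path one on these instances.
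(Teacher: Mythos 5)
Your construction for $\X_{\mathcal{P}}$ is correct and is essentially the paper's own proof: both arguments split every vertex of $H$ into an in/out pair joined by a mandatory arc selected in the first stage, use cost-$0$ connecting arcs to encode the arcs of $H$, and add cost-$1$ arcs whose sole purpose is to keep $\pmb{x}$ feasible, so that a zero-cost recourse action exists if and only if $H$ has a Hamiltonian $s_H$-$t_H$ path; the $0$/positive gap then rules out any finite approximation ratio unless P$=$NP. The only cosmetic differences are that the paper ensures feasibility with $n-1$ dummy arcs along one fixed chain $(v_i',v_{i+1})$ instead of your $\Theta(n^2)$ shortcut arcs between all gadget pairs, and it does not need a separate global source and sink.

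Your final sentence, however, is wrong and should be deleted: the same construction does \emph{not} prove the claim for $\overline{\X}_{\mathcal{P}}$, and no construction can, because the paper observes that \textsc{Inc Shortest Path} with $\overline{\X}_{\mathcal{P}}$ is solvable in polynomial time. Under $\overline{\X}_{\mathcal{P}}$ the set $\pmb{x}+\pmb{y}$ only has to \emph{contain} the arcs of some $s$-$t$ path; the first-stage arcs sit in that set for free and need not be traversed at all, so the mandatory gadget arcs lose exactly the property your argument relies on. Concretely, in your instance the cheapest completion under $\overline{\X}_{\mathcal{P}}$ has cost $0$ whenever $H$ contains \emph{any} directed $s_H$-$t_H$ path (route through the gadgets that path visits and skip all others), so the gap no longer encodes Hamiltonicity. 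The degree constraints cannot rescue this: they force a path that uses a mandatory arc to traverse its gadget properly, but they cannot force the path to use every mandatory arc, which is what $\X_{\mathcal{P}}$, and only $\X_{\mathcal{P}}$, guarantees. Since the theorem concerns only $\X_{\mathcal{P}}$, this slip does not invalidate your proof of the stated result, but it contradicts the paper's complexity classification of the two variants.
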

\begin{proof}
Consider the following strongly NP-complete \textsc{Hamiltonian Path} problem~\cite{GJ79}. We are given a directed graph $G=(V,A)$ with two distinguished nodes $v_1,v_n\in V$. We ask if it is possible to find a directed path from $v_1$ to $v_n$ that visits each node in $V$ exactly once. We build a graph $G'=(V',A')$ as follows. For each $v_i\in V$, the node set $V'$ contains two nodes $v_i$ and $v_i'$. The set of  arcs $A'$ contains the \emph{forward arcs} $(v_i,v_i')$ for all $v_i\in V$, and \emph{backward arcs} $(v_i',v_j)$ for all $(v_i,v_j)\in A$. Finally, there are \emph{dummy arcs} $(v_i',v_{i+1})$ for all $i\in[n-1]$. The first stage costs of all arcs in $A'$ are~0. Under the second-stage scenario $\pmb{c}$, the costs of all forward and backward arcs are~0 and the costs of all dummy arcs are~1. We set $s=v_1$ and $t=v_n$. Figure~\ref{fig:sp-inner} shows a sample reduction. 

\begin{figure}[htb]
\begin{center}
\includegraphics[width=0.5\textwidth]{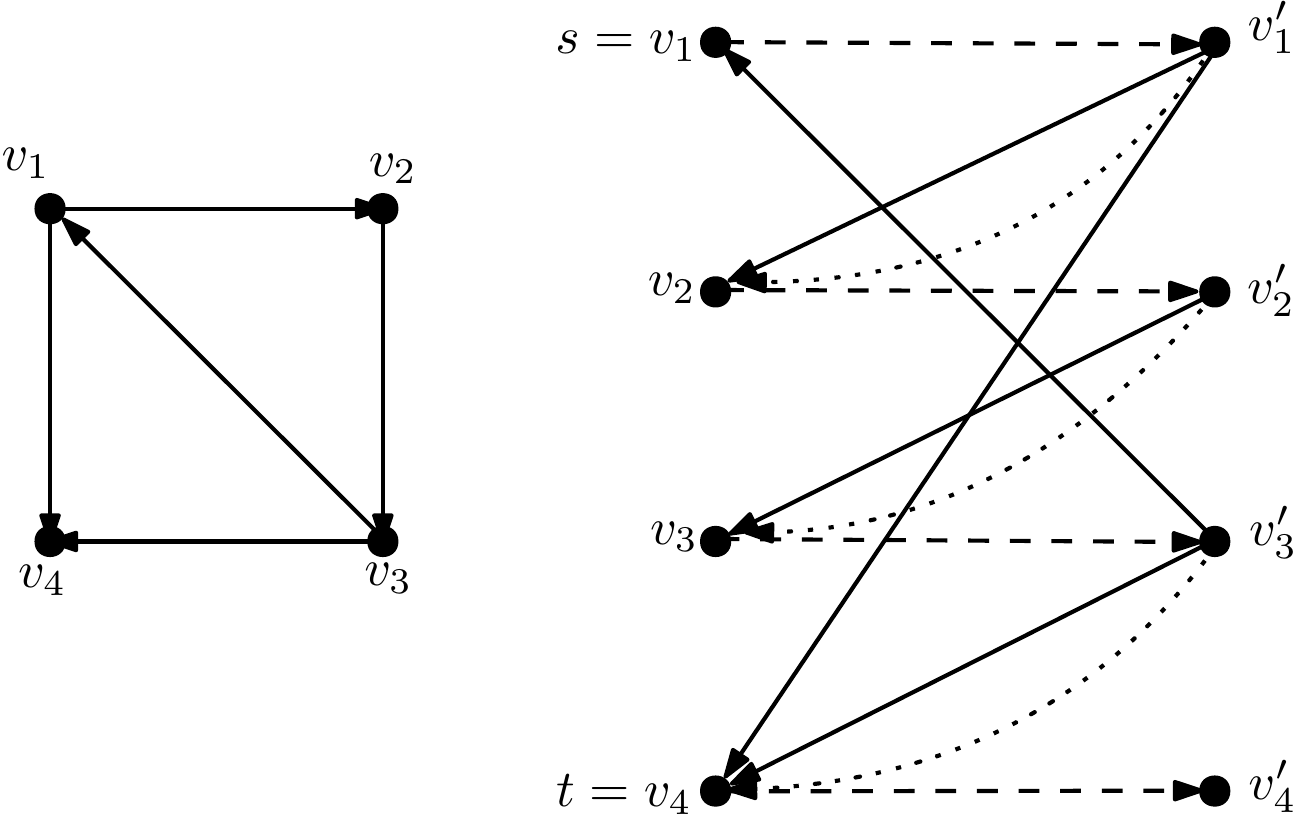}%
\caption{A sample reduction in the proof of Theorem~\ref{th:sp-inner}. Arcs with zero costs are solid. Dashed forward arcs were chosen in the first stage.}\label{fig:sp-inner}
\end{center}
\end{figure}

The presence of dummy arcs ensures that $\pmb{x}$ is feasible, as we can complete $\pmb{x}$ to a simple path by using all dummy arcs. Now it is easy to see that there is an optimal recourse action $\pmb{y}\in \mathcal{R}(\pmb{x})$, which selects only the backward arcs  if and only if there is a Hamiltonian path in $G$. In other words, there is $\pmb{y}\in \mathcal{R}(\pmb{x})$ such that ${\rm Inc}(\pmb{x},\pmb{c})=0$ if and only if $G$ has a Hamiltonian path, which proves the theorem.

\end{proof}

\section{The selection problem}
\label{secsel}

In this section we discuss the  \textsc{TStR Selection} problem, in which $\X_{\mathcal{S}}=\{\pmb{x}\in \{0,1\}^n: x_1+\dots+x_n=p\}$ for some fixed $p\in [n]$. We can interpret $\pmb{x}\in \X_{\mathcal{S}}$ as a characteristic vector of a selection of exactly $p$ items out of $n$ available. The set of feasible first-stage solutions is then $\X'_{\mathcal{S}}=\{\pmb{x}\in \{0,1\}^n: x_1+\dots+x_n\leq p\}$. The deterministic \textsc{Selection} problem can be solved in $O(n)$ time. We first find in $O(n)$ time the $p$th smallest item cost $c$ (see, e.g.,~\cite{CO90})
and select $p$ items with the costs at most $c$. It is easy to see that the corresponding \textsc{TSt Selection} and \textsc{Inc Selection} problems are solvable in $O(n)$ time as well. Various robust versions of \textsc{Selection} have been discussed in the literature. In particular, the single-stage minmax and minmax regret models were investigated in~\cite{A01, C04, D13, KKZ13} and the robust two-stage models, with the minmax criterion, were discussed in~\cite{CGKZ18, KZ15b}.

\subsection{Computing the maximum regret}
\label{secregrsel}

In this section we show that the value of $Z(\pmb{x})$ for a given $\pmb{x}\in \mathcal{X}'_{\mathcal{S}}$ can be computed in polynomial time. Consider the subproblem $\min_{\pmb{y}\in \mathcal{R}(\pmb{x})} \pmb{c}_{\pmb{v}}^T\pmb{y}$ from Proposition~\ref{propwc1}, which can be represented as the following linear programming problem:
 \begin{align}
\min\ &  \pmb{c}_{\pmb{v}}^T\pmb{y} \label{con-in0} \\
\text{s.t. } & \sum_{i\in[n]} y_i \ge p - \sum_{i\in[n]} x_i \label{con-in1}\\
& y_i \le 1 - x_i & \forall i\in[n] \label{con-in2} \\
& y_i\geq 0 & \forall i\in[n] \label{con-in3}
\end{align}
Observe that we have relaxed the constraints $y_i\in \{0,1\}$, $i\in [n]$, without changing the optimal objective function value, because the problem (\ref{con-in0})-(\ref{con-in3}) has an integral optimal solution, which is due to total unimodularity
of the constraint matrix (\ref{con-in1})-(\ref{con-in2}).
Using the definition of $\pmb{c}_{\pmb{v}}$ (see Proposition~\ref{propwc1}), we can write the dual to~(\ref{con-in0})-(\ref{con-in3}), which is another linear programming problem of the form:
 \begin{align*}
\max\ & (p-\sum_{i\in[n]}x_i)\alpha + \sum_{i\in[n]} (x_i-1) \beta_i \\
\text{s.t. } & \alpha - \beta_i \le \underline{c}_iv_i + \overline{c}_i(1-v_i) & \forall i\in[n] \\
& \alpha, \beta_i \ge 0 & \forall i\in[n]
\end{align*}
Using Proposition~\ref{propwc1} and equality~(\ref{zuv}), we can build the following compact mixed integer programming formulation for computing the maximum regret of a given first-stage solution $\pmb{x}\in \mathcal{X}'_{\mathcal{S}}$:
\begin{align}
Z(\pmb{x})=\max\ & \pmb{C}^T\pmb{x}-\pmb{C}^T\pmb{u}-  \underline{\pmb{c}}^T\pmb{v}+(p-\sum_{i\in[n]}x_i)\alpha + \sum_{i\in[n]} (x_i-1) \beta_i   \label{mipz0}\\
\text{s.t. } & \alpha - \beta_i \le \underline{c}_iv_i + \overline{c}_i(1-v_i)& \forall i\in[n] \\
& \sum_{i\in[n]} (u_i + v_i) = p \label{mipz2}\\
& u_i + v_i \le 1 & \forall i\in[n] \label{mipz3}\\
& u_i,v_i \in \{0,1\} & \forall i\in[n] \label{mipz4}\\
& \alpha, \beta_i \ge 0 & \forall i\in[n]  \label{mipz8}
\end{align}
The constraints (\ref{mipz2})-(\ref{mipz4}) represent a feasible pair $(\pmb{u},\pmb{v})\in \mathcal{Z}$.
Define $X=\{i\in [n]: x_i=1\}$ and let $\overline{X}=\{i\in [n]: x_i=0\}$ be the complement of $X$.  Observe that in an optimal solution to~(\ref{mipz0})-(\ref{mipz8}) we can fix $\beta_i=[\alpha- \underline{c}_iv_i - \overline{c}_i(1-v_i)]_{+}$ for each $i\in [n]$ (we use the notation $[t]_{+}=\max\{0,t\}$). As $v_i\in\{0,1\}$, we can set:
\begin{equation}
\label{z00}
\sum_{i\in [n]}(x_i-1)\beta_i=- \sum_{i\in \overline{X}} ([\alpha-\underline{c}_i]_+  - [\alpha-\overline{c}_i]_+) v_i -\sum_{i\in \overline{X}} [\alpha-\overline{c}_i]_{+}.
\end{equation}
Denote $c^*_i(\alpha)=[\alpha-\underline{c}_i]_+  - [\alpha-\overline{c}_i]_+$ and, using~(\ref{z00}), rewrite (\ref{mipz0})-(\ref{mipz8}) as 

\begin{align}
Z(\pmb{x})=\max\ & \pmb{C}^T\pmb{x}-\pmb{C}^T\pmb{u}-  \underline{\pmb{c}}^T\pmb{v}+(p-|X|)\alpha - \sum_{i\in \overline{X}} c^*_i(\alpha)v_i -\sum_{i\in \overline{X}} [\alpha-\overline{c}_i]_{+}   \label{mipz10}\\
\text{s.t. }& \sum_{i\in[n]} (u_i + v_i) = p \label{mipz12}\\
& u_i + v_i \le 1 & \forall i\in[n] \label{mipz13}\\
& u_i,v_i \in \{0,1\} & \forall i\in[n] \label{mipz14}\\
& \alpha \ge 0 &  \label{mipz18}
\end{align}
Define $\mathcal{C}=\{\underline{c}_1,\dots,\underline{c}_n\}\cup\{\overline{c}_1,\dots,\overline{c}_n\}$.
\begin{lem}
\label{lemz1}
	There is an optimal solution to~(\ref{mipz10})-(\ref{mipz18}) in which $\alpha\in \mathcal{C}$.
\end{lem}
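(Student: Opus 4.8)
The plan is to exploit that, once the binary variables $(\pmb{u},\pmb{v})$ are fixed, the objective~(\ref{mipz10}) becomes a univariate piecewise-linear function of $\alpha$ whose breakpoints all lie in $\mathcal{C}$. I would first write it, for a fixed feasible pair $(\pmb{u},\pmb{v})\in\mathcal{Z}$, as
$$f(\alpha)=K+(p-|X|)\alpha-\sum_{i\in\overline{X}}c^*_i(\alpha)v_i-\sum_{i\in\overline{X}}[\alpha-\overline{c}_i]_+,$$
where $K=\pmb{C}^T\pmb{x}-\pmb{C}^T\pmb{u}-\underline{\pmb{c}}^T\pmb{v}$ is independent of $\alpha$. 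Each of $[\alpha-\underline{c}_i]_+$ and $[\alpha-\overline{c}_i]_+$ bends only at $\underline{c}_i$, resp.\ $\overline{c}_i$, and $c^*_i(\alpha)=[\alpha-\underline{c}_i]_+-[\alpha-\overline{c}_i]_+$ is assembled from them; hence $f$ is continuous and piecewise linear, with all breakpoints contained in $\mathcal{C}$.

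Since $(\pmb{u},\pmb{v})$ ranges over a finite set, it is enough to show that for each fixed pair the maximum of $f$ over $\alpha\ge 0$ is attained at some $\alpha\in\mathcal{C}$. Being affine between consecutive breakpoints, such an $f$ can only attain its maximum over $[0,\infty)$ at $\alpha=0$, at a breakpoint (all lying in $\mathcal{C}$), or in the limit $\alpha\to\infty$, so the only two things left to check are the behaviour at infinity and the endpoint $\alpha=0$.

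For the behaviour at infinity I would take $\alpha$ beyond every element of $\mathcal{C}$, where each $c^*_i(\alpha)=\overline{c}_i-\underline{c}_i$ is constant and each $[\alpha-\overline{c}_i]_+$ has slope $1$; the slope of $f$ is then $(p-|X|)-|\overline{X}|=p-n\le 0$, so $f$ is eventually nonincreasing and its supremum is attained at a finite point, which can be taken in $\{0\}\cup\mathcal{C}$. To remove $\alpha=0$, note that on $[0,\min\mathcal{C}]$ all $c^*_i$ and all $[\alpha-\overline{c}_i]_+$ vanish, so $f$ has slope $p-|X|\ge 0$ there (using $|X|\le p$, which holds because $\pmb{x}\in\mathcal{X}'_{\mathcal{S}}$); thus I may push $\alpha$ up to $\min\mathcal{C}\in\mathcal{C}$ without decreasing the objective.

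The conceptual steps are routine once the piecewise-linear picture is set up; the only points demanding care are the slope-at-infinity computation---where the cancellation $(p-|X|)-|\overline{X}|=p-n$ together with $p\le n$ is what rules out an unbounded or ``escaping'' optimum---and the trivial but necessary observation $|X|\le p$ that legitimizes shifting the left endpoint $\alpha=0$ into $\mathcal{C}$.
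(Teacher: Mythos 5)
Your proof is correct and follows essentially the same route as the paper: fix $(\pmb{u},\pmb{v})\in\mathcal{Z}$, observe that the objective becomes a univariate piecewise-linear function of $\alpha$ with breakpoints in $\mathcal{C}$ and nonpositive slope beyond the largest breakpoint (the paper phrases this as $|\overline{X}|\ge p-|X|$ after merging $c^*_i(\alpha)v_i+[\alpha-\overline{c}_i]_+$ into a single term $[\alpha-\hat{c}_i]_+$), so the maximum is attained at a breakpoint. Your explicit treatment of the endpoint $\alpha=0$ via the initial slope $p-|X|\ge 0$ is a detail the paper leaves implicit, but it does not change the approach.
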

\begin{proof}
	  Fix any $(\pmb{u},\pmb{v})\in \mathcal{Z}$ to~(\ref{mipz10})-(\ref{mipz18}). The optimal value of $\alpha\geq 0$ can be then found by solving the following problem:
\begin{equation}
\label{falpha}
\max_{\alpha\geq 0}\left((p-|X|) \alpha-\sum_{i\in \overline{X}} [\alpha-\hat{c}_i]_{+}\right),
\end{equation}
where $\hat{c}_i=\underline{c}_i$ if $v_i=1$ and $\hat{c}_i=\overline{c}_i$, otherwise.  Since $|\overline{X}|\geq p-|X|$, the problem~(\ref{falpha}) attains a maximum at some $\alpha\geq 0$. As the objective function of this problem is piecewise linear,  the optimal value of $\alpha$ is at some $\hat{c}_k$, $k\in \overline{X}$. Since $\hat{c}_k\in \mathcal{C}$, the lemma follows.
\end{proof}

\begin{thm}
	The value of $Z(\pmb{x})$ for a given $\pmb{x}\in \X_{\mathcal{S}}'$ can be computed in $O(n^2)$ time.
\end{thm}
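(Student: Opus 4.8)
The plan is to use Lemma~\ref{lemz1} to reduce the continuous variable $\alpha$ to the finite set $\mathcal{C}$, and then to argue that for each fixed value of $\alpha$ the remaining maximization over $(\pmb{u},\pmb{v})\in\mathcal{Z}$ collapses to a single top-$p$ selection. First I would record that $|\mathcal{C}|\le 2n$, so by Lemma~\ref{lemz1} it is enough to evaluate the objective of~(\ref{mipz10})--(\ref{mipz18}) at each candidate $\alpha\in\mathcal{C}$ and return the best value. Fix one such $\alpha$. Then $\pmb{C}^T\pmb{x}$, $(p-|X|)\alpha$ and $\sum_{i\in\overline{X}}[\alpha-\overline{c}_i]_{+}$ are constants, and the only part of the objective that depends on $(\pmb{u},\pmb{v})$ is $-\pmb{C}^T\pmb{u}-\underline{\pmb{c}}^T\pmb{v}-\sum_{i\in\overline{X}}c^*_i(\alpha)v_i$, to be maximized subject to the constraints~(\ref{mipz12})--(\ref{mipz14}).

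The key observation is that these constraints decouple across indices: by $u_i+v_i\le 1$ and $\sum_{i\in[n]}(u_i+v_i)=p$, a feasible $(\pmb{u},\pmb{v})$ picks exactly $p$ of the indices and labels each selected index either as a $\pmb{u}$-index or a $\pmb{v}$-index, with no interaction between distinct indices other than the single cardinality constraint. Hence for each $i$ the best contribution from selecting $i$ is $g_i(\alpha)=\max\{-C_i,\,-\underline{c}_i-c^*_i(\alpha)\}$ when $i\in\overline{X}$ and $g_i(\alpha)=\max\{-C_i,\,-\underline{c}_i\}$ when $i\in X$ (the $c^*_i(\alpha)$ term appears only for $i\in\overline{X}$, matching the restriction of the sum in~(\ref{mipz10})). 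Since exactly $p$ indices must be chosen, the optimal value of this inner problem is precisely the sum of the $p$ largest values among $g_1(\alpha),\dots,g_n(\alpha)$. Computing all $g_i(\alpha)$ costs $O(n)$, and extracting the sum of the $p$ largest costs $O(n)$ by a linear-time selection routine as in~\cite{CO90}.

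Adding back the constants, which cost $O(n)$ to evaluate, gives the objective value associated with this $\alpha$ in $O(n)$ time; iterating over the at most $2n$ candidates in $\mathcal{C}$ and taking the maximum yields $Z(\pmb{x})$ in $O(n^2)$ time. The main obstacle---and essentially the only content beyond Lemma~\ref{lemz1}---is recognizing that the inner maximization over $\mathcal{Z}$ separates index by index into a per-item choice between a $\pmb{u}$-assignment and a $\pmb{v}$-assignment, so that the whole integer program degenerates into ``pick the $p$ best items'' rather than a genuine combinatorial optimization; once this separation is established, the cardinality constraint is handled by a standard $p$th-order-statistic computation and the claimed $O(n^2)$ bound follows.
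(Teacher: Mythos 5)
Your proof is correct and follows essentially the same route as the paper: fix $\alpha\in\mathcal{C}$ via Lemma~\ref{lemz1}, observe that the remaining maximization over $(\pmb{u},\pmb{v})\in\mathcal{Z}$ decomposes per item under the single cardinality constraint, and solve it in $O(n)$ time by a $p$th-order-statistic computation, giving $O(n^2)$ overall. The only cosmetic difference is that the paper packages the inner problem as a \textsc{TSt Selection} instance with costs $\pmb{C}$ and $\hat{\pmb{c}}$ (already shown to be solvable in $O(n)$ time), whereas you re-derive that fact directly through the per-item maxima $g_i(\alpha)$.
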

\begin{proof}
	Fix $\alpha\in \mathcal{C}$  in (\ref{mipz10})-(\ref{mipz18}). The remaining optimization problem is then
\begin{align}
\min &\; \pmb{C}^T\pmb{u}+\hat{\pmb{c}}^T\pmb{v} 
\label{sub-start}  \\
\text{s.t. } & \sum_{i\in[n]} (u_i + v_i) = p \\
& u_i + v_i \le 1 & \forall i\in [n] \\
& u_i,v_i \in \{0,1\} & \forall i\in[n], \label{sub-end}
\end{align}
where $\hat{c}_i=\underline{c}_i$ if $i\notin \overline{X}$ and $\hat{c}_i=\underline{c}_i+[\alpha-\underline{c}_i]_{+}-[\alpha-\overline{c}_i]_{+}$, otherwise. 
Observe that (\ref{sub-start})-(\ref{sub-end}) is a \textsc{TSt Selection} problem with the first-stage costs $\pmb{C}$ and the second-stage costs $\hat{\pmb{c}}$, which can be solved in $O(n)$ time. By Lemma~\ref{lemz1} it is enough to try at most $2n$ values of $\alpha$ in $\mathcal{C}$ to find an optimal solution to~(\ref{mipz10})-(\ref{mipz18}). Therefore, one can solve this problem and thus compute $Z(\pmb{x})$ in $O(n^2)$ time.
	
\end{proof}

\subsection{Compact MIP formulations}

Fix $\alpha \in \mathcal{C}$ in~(\ref{mipz10})-(\ref{mipz18}). One can easily check that the constraint matrix (\ref{mipz12})-(\ref{mipz13}) is totally unimodular. Hence we can relax the constraints $u_i, v_i\in \{0,1\}$ with $u_i, v_i \ge 0$ and write the following dual to  the relaxed (\ref{mipz10})-(\ref{mipz18}):

\begin{align*}
\min\ & \pmb{C}^T\pmb{x}+ (p-|X|)\alpha-\sum_{i\in [n]}[\alpha-\overline{c}_i]_{+}(1-x_i)+ p \pi(\alpha) + \sum_{i\in[n]} \rho_i(\alpha) \\
\text{s.t. } & \pi(\alpha) + \rho_i(\alpha) \ge -C_i &\forall i\in[n]\\
& \pi(\alpha) + \rho_i(\alpha) \ge -\underline{c}_i-c^*_i(\alpha)(1-x_i)  & \forall i\in[n] \\
& \rho_i(\alpha) \ge 0 & \forall i\in[n]
\end{align*}
which can be rewritten as:
\begin{align*}
\min\ & \pmb{C}^T\pmb{x}+ (p-\sum_{i\in [n]} x_i)\alpha+\sum_{i\in [n]}[\alpha-\overline{c}_i]_{+}(x_i-1)- p \pi(\alpha) + \sum_{i\in[n]} \rho_i(\alpha)  \\
\text{s.t. } & \pi(\alpha) - \rho_i(\alpha) \le C_i &\forall i\in[n]\\
& \pi(\alpha) - \rho_i(\alpha) \le \underline{c}_i+c^*_i(\alpha)(1-x_i)  & \forall i\in[n] \\
& \rho_i(\alpha) \ge 0 & \forall i\in[n]
\end{align*}
We can thus construct the following MIP formulation for \textsc{TStR Selection}:
\begin{align}
\min\ &  \pmb{C}^T\pmb{x}+ z \label{mipsel0} \\
\text{s.t. } &  z\geq (p-\sum_{i\in [n]} x_i)\alpha+\sum_{i\in [n]}[\alpha-\overline{c}_i]_{+}(x_i-1)- p \pi(\alpha) + \sum_{i\in[n]} \rho_i(\alpha)  & \alpha\in \mathcal{C}\\
& \pi(\alpha) - \rho_i(\alpha) \le C_i &\forall i\in[n], \alpha\in \mathcal{C}\\
& \pi(\alpha) - \rho_i(\alpha) \le  \underline{c}_i+c^*_i(\alpha)(1-x_i) & \forall i\in[n], \alpha\in \mathcal{C} \\
& \sum_{i\in [n]} x_i\leq p \\
& \rho_i(\alpha) \ge 0 & \forall i\in[n], \alpha\in \mathcal{C} \\
& x_i\in \{0,1\} & i\in [n] \label{mipsel1}
\end{align}
In the following we will show how to decompose (\ref{mipsel0})-(\ref{mipsel1}) into a family of problems of smaller size. The key idea will be to show that it is enough to try at most $O(n^2)$ cases for the vector of variables $\pmb{\pi}=(\pi(\alpha))_{\alpha\in \mathcal{C}}$.
\begin{lem}
\label{lemenum}
	There exist $\hat{c}_k\in \mathcal{A}=\{C_1,\dots,C_n\}\cup\{\underline{c}_1,\dots,\underline{c}_n\}$ and $\hat{c}_l\in \mathcal{B}=\{C_1,\dots,C_n\} \cup\{\underline{c}_1,\dots,\underline{c}_n\}\cup \{\overline{c}_1,\dots,\overline{c}_n\}$, $
\hat{c}_k\leq \hat{c}_l$ such that
	\[ \pi(\alpha) = \max\{\hat{c}_k, \min\{\alpha, \hat{c}_l\}\}=\begin{cases}
\hat{c}_k & \text{ if } \alpha \le \hat{c}_k \\
\alpha & \text{ if } \hat{c}_k < \alpha <\hat{c}_l \\
\hat{c}_l & \text{ if } \hat{c}_l \le \alpha
\end{cases}, \; \alpha\in \mathcal{C} \]
are optimal to (\ref{mipsel0})-(\ref{mipsel1}).
\end{lem}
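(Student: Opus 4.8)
The plan is to start from an arbitrary optimal solution of~(\ref{mipsel0})--(\ref{mipsel1}), keep its first-stage vector $\pmb{x}$ (hence the sets $X$ and $\overline{X}$) and its value $z$, and then exhibit a choice of the variables $\pi(\alpha)$, $\alpha\in\cC$, that has the required clamped form while remaining optimal. For a fixed $\pmb{x}$ the variables $\pi(\alpha),\rho_i(\alpha)$ decouple across scenarios $\alpha\in\cC$: minimizing the right-hand side of the $z$-constraint over $\rho$ forces $\rho_i(\alpha)=[\pi(\alpha)-m_i(\alpha)]_+$ with $m_i(\alpha)=\min\{C_i,\underline{c}_i+c^*_i(\alpha)(1-x_i)\}$, and what remains is, for each $\alpha$ separately, to maximize the concave piecewise-linear function $g_\alpha(\pi)=p\pi-\sum_{i\in[n]}[\pi-m_i(\alpha)]_+$. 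Its right slope is $p-|\{i:m_i(\alpha)<\pi\}|$, which decreases from $p$ to $p-n\le 0$, so $g_\alpha$ is maximized at the $p$-th smallest value $m_{(p)}(\alpha)$ of $m_1(\alpha),\dots,m_n(\alpha)$. I would record that taking $\pi(\alpha)=m_{(p)}(\alpha)$ simultaneously for all $\alpha\in\cC$ minimizes each $z$-constraint and hence keeps the solution optimal; it then suffices to show that $\alpha\mapsto m_{(p)}(\alpha)$ has the asserted clamped form.

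The second step is to rewrite each $m_i(\alpha)$ as a clamp. From $c^*_i(\alpha)=[\alpha-\underline{c}_i]_+-[\alpha-\overline{c}_i]_+$ one obtains $\underline{c}_i+c^*_i(\alpha)=\max\{\underline{c}_i,\min\{\alpha,\overline{c}_i\}\}$, so for $i\in\overline{X}$ we get $m_i(\alpha)=\min\{C_i,\max\{\underline{c}_i,\min\{\alpha,\overline{c}_i\}\}\}=\max\{a_i,\min\{\alpha,b_i\}\}$ with $a_i=\min\{\underline{c}_i,C_i\}$ and $b_i=\min\{\overline{c}_i,C_i\}$; for $i\in X$ the term $c^*_i(\alpha)(1-x_i)$ vanishes and $m_i(\alpha)=\min\{\underline{c}_i,C_i\}$ is constant, i.e.\ the same clamp with $a_i=b_i=\min\{\underline{c}_i,C_i\}$. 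In every case $a_i\le b_i$, with $a_i\in\mathcal{A}$ and $b_i\in\mathcal{B}$.

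The crux is an order-statistic identity for clamps: if $f_i(\alpha)=\max\{a_i,\min\{\alpha,b_i\}\}$ with $a_i\le b_i$, then the $p$-th smallest value satisfies $f_{(p)}(\alpha)=\max\{a_{(p)},\min\{\alpha,b_{(p)}\}\}$, where $a_{(p)}$ and $b_{(p)}$ denote the $p$-th smallest of $\{a_i\}$ and $\{b_i\}$. I would prove this by comparing each $f_i(\alpha)$ to $\alpha$ on three ranges, using the equivalences $f_i(\alpha)\le\alpha\iff a_i\le\alpha$ and $f_i(\alpha)\ge\alpha\iff b_i\ge\alpha$ together with the pointwise bounds $a_i\le f_i(\alpha)\le b_i$. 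On $a_{(p)}\le\alpha\le b_{(p)}$ at least $p$ of the $f_i(\alpha)$ are $\le\alpha$ and at least $n-p+1$ are $\ge\alpha$, which forces $f_{(p)}(\alpha)=\alpha$; for $\alpha\le a_{(p)}$ the bound $f_{(j)}(\alpha)\ge a_{(j)}$ and a count of the indices with $a_i\le a_{(p)}$ give $f_{(p)}(\alpha)=a_{(p)}$; the range $\alpha\ge b_{(p)}$ is symmetric. Applied to the $m_i(\alpha)$ this yields $m_{(p)}(\alpha)=\max\{\hat{c}_k,\min\{\alpha,\hat{c}_l\}\}$ with $\hat{c}_k=a_{(p)}\in\mathcal{A}$, $\hat{c}_l=b_{(p)}\in\mathcal{B}$ and $\hat{c}_k\le\hat{c}_l$, exactly the claimed form. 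The delicate points are verifying that $m_{(p)}(\alpha)$ genuinely maximizes $g_\alpha$ (so substituting it does not change $z$) and the three-range counting; the main obstacle is packaging the clamp order-statistic identity cleanly, as everything else is a direct rewrite.
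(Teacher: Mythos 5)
Your proposal is correct and follows essentially the same route as the paper: fix $\pmb{x}$, observe that the variables decouple across $\alpha\in\mathcal{C}$, reduce each $\pi(\alpha)$ to the $p$-th smallest of the values $r_i(\alpha,x_i)=\min\{C_i,\underline{c}_i+c_i^*(\alpha)(1-x_i)\}$, and note that each such value is a clamp of $\alpha$ with lower endpoint in $\mathcal{A}$ and upper endpoint in $\mathcal{B}$, so their $p$-th order statistic is again such a clamp. The only difference is one of rigor, not of route: your explicit identity $m_{(p)}(\alpha)=\max\{a_{(p)},\min\{\alpha,b_{(p)}\}\}$ with its three-range counting argument makes precise the final step that the paper justifies only by inspecting the possible shapes of the $r_i$.
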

\begin{proof}
	Let us fix $\pmb{x}$ and $\alpha$ in (\ref{mipsel0})-(\ref{mipsel1}), and define $r_i(\alpha,x_i) = \min\{ C_i, \underline{c}_i+c_i^*(\alpha)(1-x_i)\}=\min\{C_i, \underline{c}_i+([\alpha-\underline{c}_i]_+  - [\alpha-\overline{c}_i]_+)(1-x_i)\}$. The values of  $\pi(\alpha)$ and $\rho_i(\alpha)$ can be then found by solving the following linear programming problem:
\begin{align*}
\max\ & p\pi(\alpha) - \sum_{i\in[n]} \rho_i(\alpha) \\
\text{s.t. } & \pi(\alpha) - \rho_i(\alpha) \le r_i(\alpha,x_i) & \forall i\in[n] \\
& \rho_i(\alpha) \ge 0 & \forall i\in[n] 
\end{align*}
By using linear programming duality, one can check that in an optimal solution to this problem, we can set $\pi(\alpha)$ to the $p$th smallest value among $r_i(\alpha,x_i)$, $i\in [n]$.
We now consider all possible shapes of $r_i(\alpha,x_i)$. \begin{enumerate}
\item $r_i(\alpha,1)= \min\{C_i,\underline{c}_i\}$,
\item If $C_i \le \underline{c}_i \le \overline{c}_i$, then  $r_i(\alpha,0)=C_i$.

\item If $\underline{c}_i \le C_i \le \overline{c}_i$, then
\[ r_i(\alpha,0) = \begin{cases} \underline{c}_i & \text{ if } \alpha \le \underline{c}_i \\
\alpha & \text{ if } \underline{c}_i < \alpha < C_i \\
C_i & \text{ if } C_i \le \alpha \end{cases} \]

\item If $\underline{c}_i \le \overline{c}_i \le C_i$, then
\[ r_i(\alpha,0) = \begin{cases} \underline{c}_i & \text{ if } \alpha \le \underline{c}_i \\
\alpha & \text{ if } \underline{c}_i < \alpha < \overline{c}_i \\
\overline{c}_i & \text{ if } \overline{c}_i \le u \end{cases} \]
\end{enumerate}
 The three cases for $r_i(\alpha,0)$ are visualized in Figure~\ref{fig-pi-cases}.
All possible shapes have in common that they have a constant value in $\{C_i,\underline{c}_i\}$ up to the diagonal. They then follow the diagonal to leave at another constant value in $\{C_i, \underline{c}_i, \overline{c}_i\}$. This means that the function representing the $p$th smallest value over all $r_i(u,x_i)$ is also of this shape, which gives the possibilities for $\pi(\alpha)$ as claimed.
\begin{figure}[htb]
\begin{center}
\subfigure[Case $C_i = 2 < \underline{c}_i$]{\includegraphics[width=.3\textwidth]{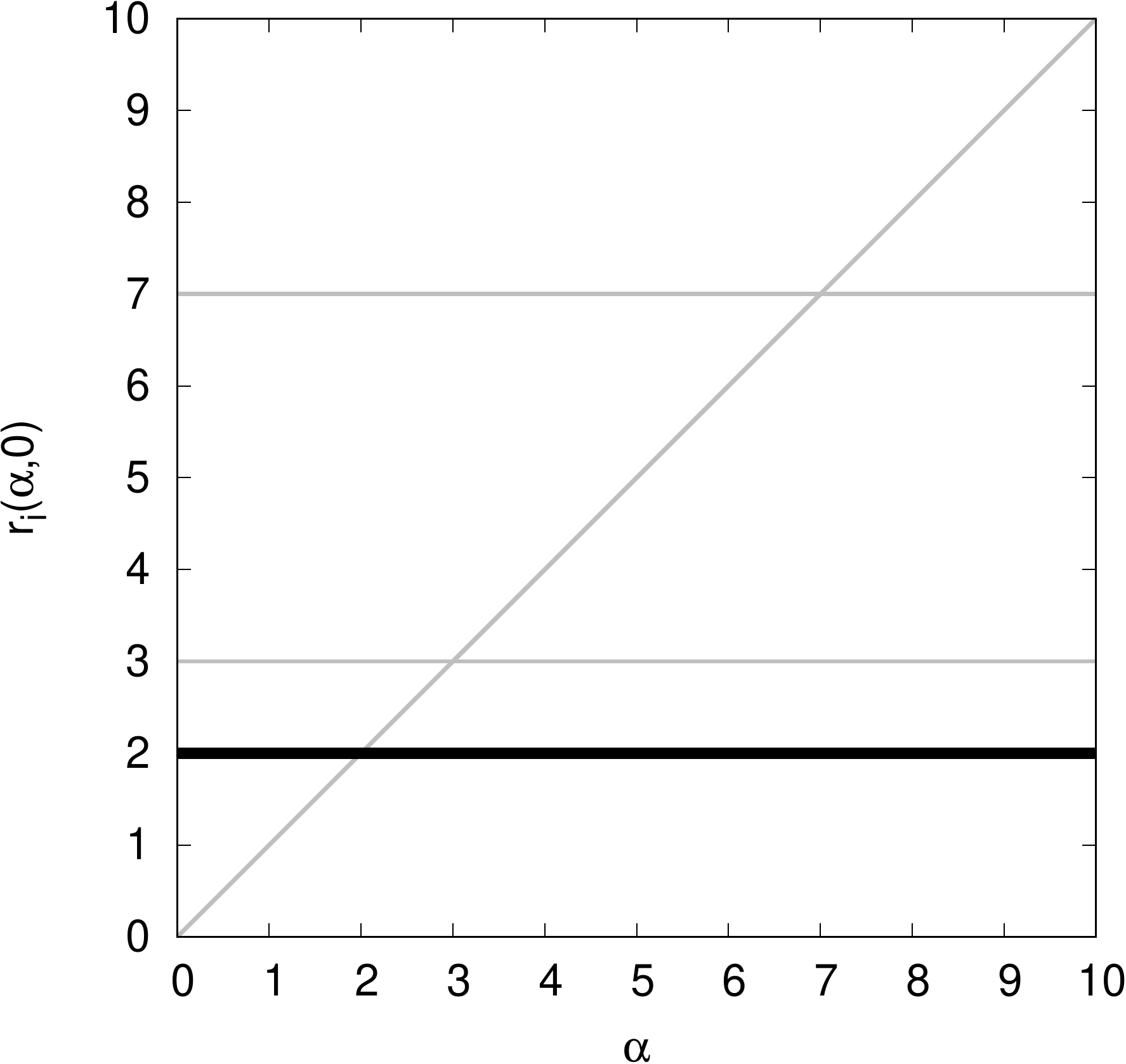}}%
\hfill
\subfigure[Case $\underline{c}_i < C_i = 5 < \overline{c}_i$]{\includegraphics[width=.3\textwidth]{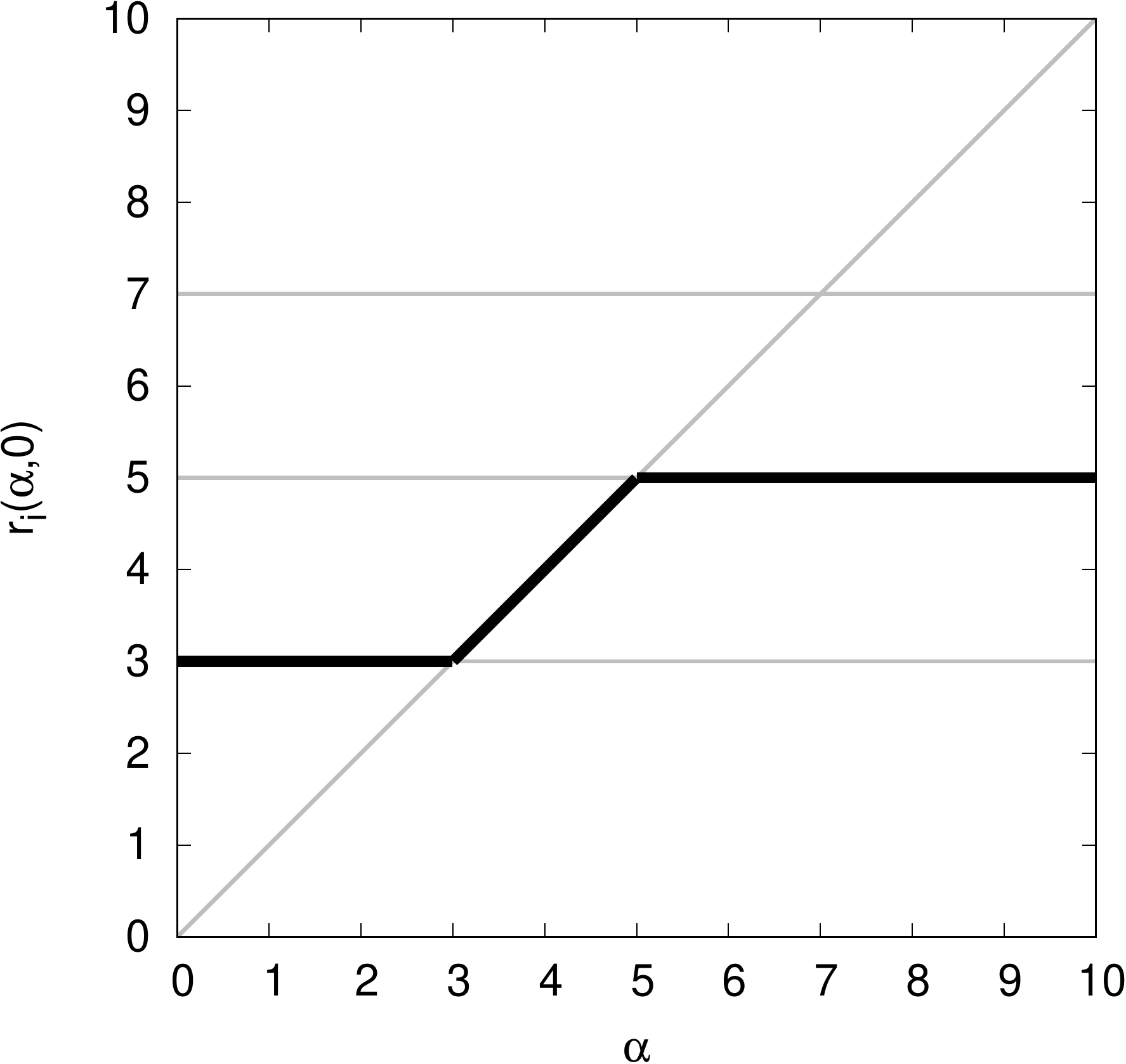}}%
\hfill
\subfigure[Case $\overline{c}_i < C_i = 8$]{\includegraphics[width=.3\textwidth]{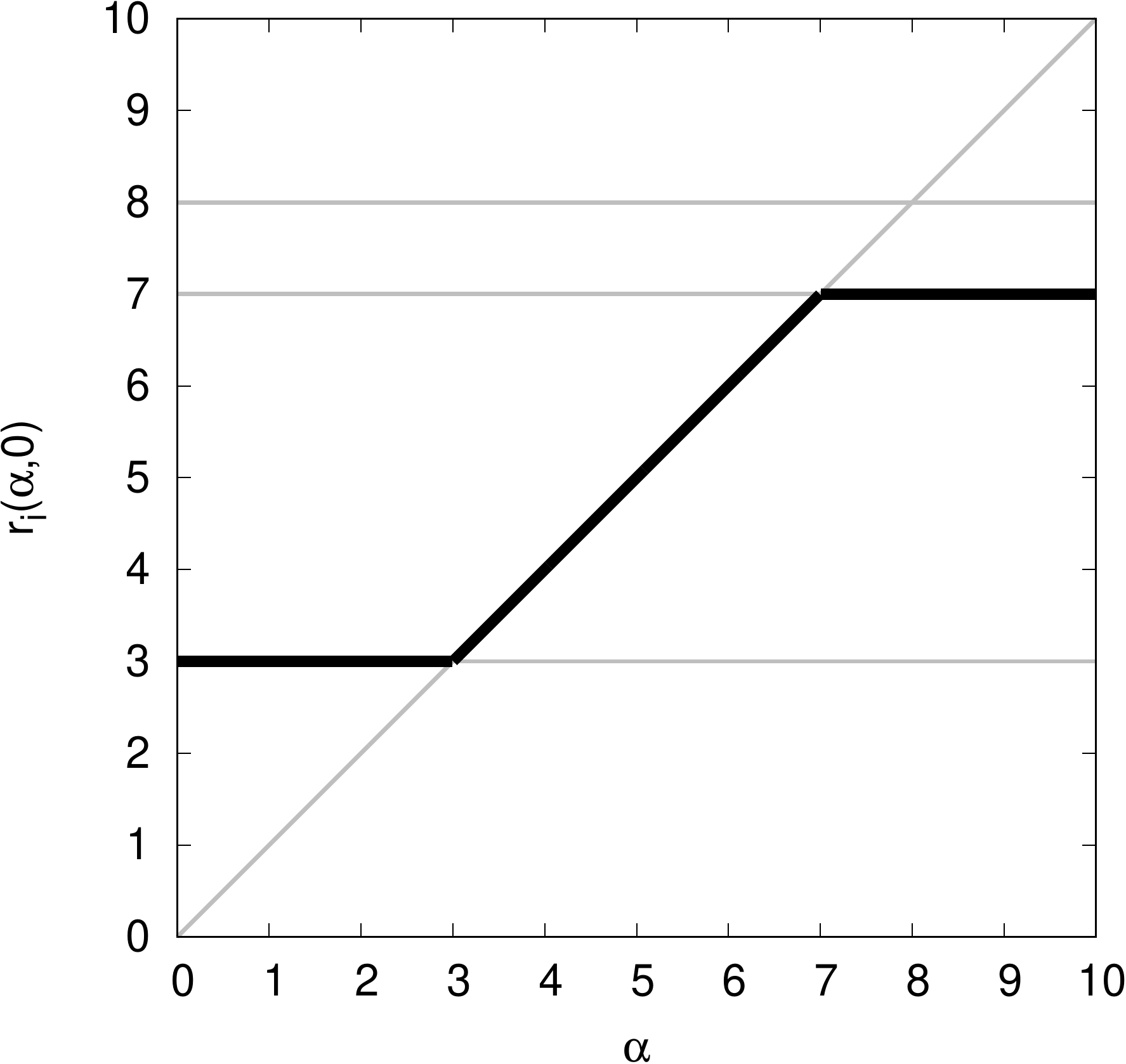}}
\caption{Shape of $r_i(\alpha,0)$ with $\underline{c}_i = 3$, $\overline{c}_i = 7$.}\label{fig-pi-cases}
\end{center}
\end{figure}
\end{proof}

Using Lemma~\ref{lemenum} we can enumerate all vectors $\pmb{\pi}=(\pi(\alpha))_{\alpha\in \mathcal{C}}$ and denote the set of these vectors by $\Pi$. Notice that $|\Pi|$ is $O(n^2)$.  For a fixed $\pmb{\pi}=(\pi(\alpha))_{\alpha\in \mathcal{C}}\in \Pi$, we can rewrite the problem (\ref{mipsel0})-(\ref{mipsel1}) as follows:
\begin{align*}
{\rm P}(\pmb{\pi}) = \min\ & z \\
\text{s.t. } & z \ge \nu^{\pmb{\pi}}(\alpha) + \sum_{i\in[n]} \omega^{\pmb{\pi}}_i(\alpha) x_i & \forall \alpha\in \mathcal{C} \\
& \sum_{i\in[n]} x_i \le p \\
& x_i \in \{0,1\} & \forall i\in [n]
\end{align*}
where
\begin{align*}
\nu^{\pmb{\pi}}(\alpha) &=  p\alpha - \sum_{i\in[n]} [\alpha-\overline{c}_i]_+ - p\pi(\alpha) + \sum_{i\in[n]} \underline{\rho}^{\pmb{\pi}}_i(\alpha) \\
\omega^{\pmb{\pi}}_i(\alpha) &= C_i - \alpha + [\alpha-\overline{c}_i]_+ + \overline{\rho}^{\pmb{\pi}}_i(\alpha) - \underline{\rho}^{\pmb{\pi}}_i(\alpha) 
\end{align*}
and
\begin{align*}
\underline{\rho}^{\pmb{\pi}}_i(\alpha) &= \max\{0,\pi(\alpha) - C_i, \pi(\alpha) - \underline{c}_i - [\alpha-\underline{c}_i]_+ + [\alpha-\overline{c}_i]_+ \} \\
\overline{\rho}^{\pmb{\pi}}_i(\alpha) &= \max\{ 0 , \pi(\alpha) - C_i, \pi(\alpha) - \underline{c}_i\}
\end{align*}
are constant values. We therefore find that
\begin{equation}
\label{eqzpi}
 \min_{\pmb{x} \in\X'} Z(\pmb{x}) = \min_{\pmb{\pi}\in\Pi} {\rm P}(\pmb{\pi}). 
 \end{equation}
According to Lemma~\ref{lemenum} we enumerate $O(n^2)$ many candidate vectors $\pmb{\pi}\in\Pi$. We then solve the resulting problem with $n$ binary variables and $O(n)$ constraints, which is substantially smaller than the MIP formulation (\ref{mipsel0})-(\ref{mipsel1}). In the next section we will propose a heuristic greedy algorithm, which is based on the decomposition~(\ref{eqzpi}).

The computational complexity of the general \textsc{TStR Selection} problem remains open. In the following we will identify some of its special cases which can be solved in polynomial time. 

\begin{prop}
\label{proppn}
	If $p=n$, then \textsc{TStR Selection} can be solved in polynomial time.
\end{prop}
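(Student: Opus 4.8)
The plan is to use the fact that the choice $p=n$ degenerates the recourse and renders the whole problem separable over the items. First I would note that when $p=n$ the only feasible completion is the all-ones vector, so $\X_{\mathcal{S}}=\{\pmb{1}\}$; consequently, for every $\pmb{x}\in\X'_{\mathcal{S}}$ the recourse set $\mathcal{R}(\pmb{x})$ consists of the single forced action $\pmb{y}=\pmb{1}-\pmb{x}$, and moreover $\X'_{\mathcal{S}}=\{0,1\}^n$ because the constraint $\sum_{i\in[n]}x_i\le n$ is vacuous. In particular the inner minimization $\min_{\pmb{y}\in\mathcal{R}(\pmb{x})}$ disappears and, writing $X=\{i:x_i=1\}$, we have ${\rm Inc}(\pmb{x},\pmb{c})=\pmb{C}^T\pmb{x}+\pmb{c}^T(\pmb{1}-\pmb{x})=\sum_{i\in X}C_i+\sum_{i\in\overline{X}}c_i$.

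Next I would compute ${\rm Opt}(\pmb{c})$. Since each item can be paid for in the first stage at cost $C_i$ or in the second stage at cost $c_i$, independently of the other items, the two-stage optimum separates as ${\rm Opt}(\pmb{c})=\sum_{i\in[n]}\min\{C_i,c_i\}$. Subtracting, and using $C_i-\min\{C_i,c_i\}=[C_i-c_i]_+$ together with $c_i-\min\{C_i,c_i\}=[c_i-C_i]_+$, the regret of $\pmb{x}$ under $\pmb{c}$ decomposes into a sum of per-item terms,
\[ {\rm Inc}(\pmb{x},\pmb{c})-{\rm Opt}(\pmb{c})=\sum_{i\in X}[C_i-c_i]_+ + \sum_{i\in\overline{X}}[c_i-C_i]_+. \]

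Because this expression has no cross terms between distinct coordinates $c_i$, I would carry out the maximization over $\pmb{c}\in\mathcal{U}$ coordinatewise, which is consistent with the extreme worst-case scenario guaranteed by Corollary~\ref{wcsc}: for $i\in X$ the term $[C_i-c_i]_+$ is largest at $c_i=\underline{c}_i$, and for $i\in\overline{X}$ the term $[c_i-C_i]_+$ is largest at $c_i=\overline{c}_i$. Hence $Z(\pmb{x})=\sum_{i\in X}[C_i-\underline{c}_i]_+ + \sum_{i\in\overline{X}}[\overline{c}_i-C_i]_+$, again a sum of independent per-item contributions governed only by whether $i$ is placed in the first stage. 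Minimizing over $\pmb{x}\in\{0,1\}^n$ then amounts to assigning each item to the cheaper of its two options, giving $\min_{\pmb{x}\in\X'_{\mathcal{S}}}Z(\pmb{x})=\sum_{i\in[n]}\min\{[C_i-\underline{c}_i]_+,[\overline{c}_i-C_i]_+\}$, which is evaluated in $O(n)$ time.

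The proof is short and has no genuine obstacle: its entire content lies in the two structural observations that $p=n$ forces a unique recourse and that both ${\rm Inc}$ and ${\rm Opt}$ split over items. The only point that deserves a line of justification is that the coordinatewise worst-case choices can be made simultaneously---this is immediate from the absence of cross terms above and is exactly the kind of extreme-scenario statement furnished by Corollary~\ref{wcsc}.
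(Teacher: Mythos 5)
Your proof is correct and follows essentially the same route as the paper: when $p=n$ the recourse is forced to $\pmb{y}=\pmb{1}-\pmb{x}$, both ${\rm Inc}$ and ${\rm Opt}$ separate over items, and the per-item worst-case contributions $[C_i-\underline{c}_i]_+$ (for $x_i=1$) and $[\overline{c}_i-C_i]_+$ (for $x_i=0$) are exactly the quantities $C_i-\min\{C_i,\underline{c}_i\}$ and $\overline{c}_i-\min\{C_i,\overline{c}_i\}$ used in the paper's proof. You simply spell out the separability and coordinatewise maximization that the paper leaves implicit.
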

\begin{proof}
	For each $\pmb{x}\in \mathcal{X}'$ and $\pmb{y}\in \mathcal{R}(\pmb{x})$, we get $\pmb{x}+\pmb{y}=n$.
	 Hence for each $i\in [n]$, if $x_i=1$, then the contribution of $i$ to the maximum regret is $C_i-\min\{C_i,\underline{c}_i\}$ and if $x_i=0$, then the contribution is $\overline{c}_i-\min\{C_i, \overline{c}_i\}$. Hence we set $x_i=1$ if $C_i-\min\{C_i,\underline{c}_i\}\leq \overline{c}_i-\min\{C_i, \overline{c}_i\}$ for every $i\in [n]$.
\end{proof}

\begin{prop}
The \textsc{TStR Selection} problem can be solved in polynomial time if two out of the following three conditions hold:
\begin{enumerate}
\item[(1)] The set $\{ C_i : i\in[n] \}$ is of constant size.
\item[(2)] The set $\{ \underline{c}_i : i\in[n] \}$ is of constant size.
\item[(3)] The set $\{ \overline{c}_i : i\in[n] \}$ is of constant size.
\end{enumerate}

\end{prop}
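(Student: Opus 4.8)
The plan is to combine the decomposition~\eqref{eqzpi} with a structural ``prefix'' property of optimal first-stage solutions. Write $(C_i,\underline{c}_i,\overline{c}_i)$ for the data of item $i$. Under the hypothesis two of the three coordinate-sets take only $O(1)$ distinct values; call the remaining coordinate the \emph{free} one. I would partition $[n]$ into \emph{groups} according to the common value of the two restricted coordinates. Since each restricted set has constant size, the number $K$ of groups is $O(1)$, and inside a group the items differ only in their free coordinate.

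First I would record a pointwise form of the decomposition. Carrying out the manipulation that produced~\eqref{eqzpi} for a \emph{single fixed} $\pmb{x}$ (this is exactly the setting of Lemma~\ref{lemenum}, whose proof fixes $\pmb{x}$), one obtains for every $\pmb{x}\in\X_{\mathcal{S}}'$
\[
Z(\pmb{x})=\min_{\pmb{\pi}\in\Pi}\ \max_{\alpha\in\mathcal{C}}\Big(\nu^{\pmb{\pi}}(\alpha)+\sum_{i\in[n]}\omega^{\pmb{\pi}}_i(\alpha)\,x_i\Big),
\]
the interchange of $\min_{\pmb{\pi}}$ and $\max_{\alpha}$ being legitimate because the dual variables attached to distinct $\alpha$ are disjoint. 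Here $\omega^{\pmb{\pi}}_i(\alpha)$ depends on $i$ only through $(C_i,\underline{c}_i,\overline{c}_i)$ and on $\pmb{\pi}$ only through the scalar $\pi(\alpha)$. The heart of the argument is a monotonicity lemma: for each fixed $\alpha\in\mathcal{C}$ and each fixed value of $\pi(\alpha)$, the coefficient $\omega^{\pmb{\pi}}_i(\alpha)$ is monotone in the free coordinate in a direction independent of $\alpha$ and $\pmb{\pi}$ — I expect it to be non-decreasing in $C_i$, non-increasing in $\underline{c}_i$, and non-increasing in $\overline{c}_i$. Each claim is checked by substituting the explicit formulas for $\overline{\rho}^{\pmb{\pi}}_i(\alpha)$ and $\underline{\rho}^{\pmb{\pi}}_i(\alpha)$ and inspecting the slope on every piece of the resulting piecewise-linear function; the only work is a short case split according to how the free coordinate compares with $\alpha$ and $\pi(\alpha)$. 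This is the step I expect to be the main obstacle, since the three nested $[\cdot]_{+}$ terms must be untangled, but it is routine.

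Granting the lemma, within each group the items are totally ordered by \emph{column domination}: if $i,j$ lie in the same group and $i$ is not worse than $j$ in the free coordinate (smaller $C$, or larger $\underline{c}$, or larger $\overline{c}$, as appropriate), then $\omega^{\pmb{\pi}}_i(\alpha)\le\omega^{\pmb{\pi}}_j(\alpha)$ for all $\alpha$ and all $\pmb{\pi}$. Consequently, for any first-stage solution $\pmb{x}$ with $x_i=0$, $x_j=1$, the solution $\pmb{x}'$ obtained by swapping these two entries satisfies, for every $\pmb{\pi}$, $\max_{\alpha}(\nu^{\pmb{\pi}}(\alpha)+\sum_k\omega^{\pmb{\pi}}_k(\alpha)x'_k)\le\max_{\alpha}(\nu^{\pmb{\pi}}(\alpha)+\sum_k\omega^{\pmb{\pi}}_k(\alpha)x_k)$, because each inner term can only decrease while the cardinality is preserved; taking the minimum over $\pmb{\pi}$ in the pointwise identity gives $Z(\pmb{x}')\le Z(\pmb{x})$. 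Iterating such swaps shows that there is an optimal first-stage solution in which, within every group, the selected items form a \emph{prefix} in the (fixed) domination order.

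Finally I would enumerate these prefix solutions. Sort each group once by its free coordinate. A prefix solution is then specified by a vector of counts $(k_1,\dots,k_K)$ with $0\le k_g\le n_g$ and $\sum_g k_g\le p$, where $n_g=|g|$; there are at most $\prod_g(n_g+1)=O(n^{K})$ such vectors, which is polynomial because $K=O(1)$. For each count vector I reconstruct $\pmb{x}$ (the top $k_g$ items of each group) and evaluate $Z(\pmb{x})$ in $O(n^2)$ time using the algorithm established earlier in this section. Returning the smallest value of $Z(\pmb{x})$ over all enumerated prefix solutions solves \textsc{TStR Selection} in $O(n^{K+2})$ time, which is polynomial.
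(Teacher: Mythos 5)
Your proof is correct, and at the algorithmic level it coincides with the paper's: partition $[n]$ into a constant number of groups determined by the two restricted coordinates, enumerate the $O(n^{K})$ vectors of per-group selection counts, argue that within a group only the extreme items in the free coordinate ever need to be selected, and evaluate each candidate with the $O(n^2)$ regret algorithm from Section~\ref{secregrsel}. The genuine difference lies in how the within-group domination is justified. The paper simply asserts it (``an optimal solution is hence to pick those $l_j$ many items with the highest $\overline{c}_i$'') with no supporting argument, whereas you derive it from the machinery of Section 5.2: the pointwise form of~(\ref{eqzpi}) (legitimate, as you note, because the proof of Lemma~\ref{lemenum} fixes $\pmb{x}$ and the dual variables attached to distinct $\alpha$ are disjoint), a monotonicity property of $\omega^{\pmb{\pi}}_i(\alpha)$ in the free coordinate, and an exchange argument. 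The monotonicity claims you defer as ``routine'' do check out: writing $r^0_i=\min\{\max\{\alpha,\underline{c}_i\},\overline{c}_i\}$ one gets $\omega^{\pmb{\pi}}_i(\alpha)=C_i-\alpha+[\alpha-\overline{c}_i]_+ +[\pi(\alpha)-\min\{C_i,\underline{c}_i\}]_+ -[\pi(\alpha)-\min\{C_i,r^0_i\}]_+$, and a piecewise slope check of the kind carried out in Lemma~\ref{w-lemma} shows this is non-decreasing in $C_i$ and non-increasing in both $\underline{c}_i$ and $\overline{c}_i$; these directions reproduce exactly the paper's rule (largest $\overline{c}_i$ when (1) and (2) hold) and make explicit its analogues (largest $\underline{c}_i$, smallest $C_i$) that the paper dismisses with ``can be treated in the same way''. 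So your route buys rigor and a uniform treatment of all three cases, at the price of leaning on the decomposition apparatus; the paper's proof is shorter but leaves its key step unproven. Two further small points in your favor: you enumerate count vectors with $\sum_g k_g\le p$, whereas the paper writes $l_1+\dots+l_\ell=p$ even though first-stage solutions may select fewer than $p$ items, and you state explicitly how the enumerated candidates are compared (via the $O(n^2)$ evaluation of $Z$), a step the paper omits.
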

\begin{proof}
We first consider the case in which the conditions (1) and (2) hold. Let the sets $T_j$, $j\in [\ell]$, consist of all $i\in [n]$ with the same values $C_i$ and $\underline{c}_i$. Due to (1) and (2), $\ell$ is constant as well.  Let $(l_1,\dots,l_{\ell})$ be a vector of nonnegative integers such that $l_1+\dots+l_{\ell}=p$. 
This vector defines a decomposition into subproblems where we pick $l_j$ many items out of each set $T_j$. These items $i\in T_j$ only differ with respect to their upper bounds $\overline{c}_i$. An optimal solution is hence to pick those $l_j$ many items $i$ in the first stage that have the highest second-stage costs $\overline{c}_i$. The number of the vectors $\pmb{l}$ enumerated is $O(n^{\ell})$ so it remains polynomial. 
The cases that assumptions (1) and (3) as well as assumptions (2) and (3) hold can be treated in the same way.
\end{proof}

\subsection{Greedy algorithm}
In this section we propose a heuristic algorithm for computing a solution to \textsc{TStR Selection}, which can be applied to larger instances.  The first idea consists in applying the mid-point scenario heuristic, i.e. to solve \textsc{TSt Selection} under scenario $\pmb{c}^m$ such that $c^m_i=(\underline{c}_i+\overline{c}_i)/2$ for each $i\in [n]$. Unfortunately, the approximation ratio of this algorithm is unbounded, which can be easily demonstrated by using an instance analogous to that in Figure~\ref{figex1}~b. Observe that the \textsc{TStR Shortest Path} instance in this figure can be seen as an instance of \textsc{TStR Selection} with $n=2$ and $p=1$. We will now propose a more complex heuristic for the problem, which is based on equation~(\ref{eqzpi}). Given $X\subseteq [n]$, let us define
\[ F^{\pmb{\pi}}(X) = \max_{\alpha\in \mathcal{C}} \left( \nu^{\pmb{\pi}}(\alpha) + \sum_{i\in X} \omega^{\pmb{\pi}}_i(\alpha) \right). \]
Using~(\ref{eqzpi}), we get
$$\min_{\pmb{x}\in \mathcal{X}'}Z(\pmb{x})=\min_{\pmb{\pi}\in \Pi} \min_{\{X :X\subseteq [n],|X|\leq p\}} F^{\pmb{\pi}}(X). $$

\begin{thm} \label{thm-sup}
Function $F^{\pmb{\pi}}$ is supermodular, i.e. for each $X\subseteq Y\subseteq [n]$ and $j\in [n]\setminus Y$ the inequality $F^{\pmb{\pi}}(Y\cup\{j\})-F^{\pmb{\pi}}(Y)\ge F^{\pmb{\pi}}(X\cup\{j\})-F^{\pmb{\pi}}(X)$ holds.
\end{thm}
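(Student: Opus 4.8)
The plan is to recognize $F^{\pmb{\pi}}$ as a pointwise maximum, over the finite chain $\mathcal{C}$, of set functions that are \emph{modular} in $X$, and then to invoke the classical principle that maximizing a jointly supermodular function over one of its arguments preserves supermodularity in the remaining argument (Topkis' monotone comparative statics). Concretely I would write $h_\alpha(X)=\nu^{\pmb{\pi}}(\alpha)+\sum_{i\in X}\omega_i^{\pmb{\pi}}(\alpha)$, so that $F^{\pmb{\pi}}(X)=\max_{\alpha\in\mathcal{C}}h_\alpha(X)$. For each fixed $\alpha$ the map $X\mapsto h_\alpha(X)$ is modular, hence trivially both super- and submodular; this is precisely why a pointwise maximum of such functions need \emph{not} be supermodular in general, and the entire content of the theorem must come from the way the coefficients $\omega_i^{\pmb{\pi}}(\alpha)$ depend on $\alpha$.

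To apply the preservation principle I would view $h$ as a function on the product lattice $\mathcal{C}\times 2^{[n]}$, with $\mathcal{C}$ treated as a chain. Two of the required conditions are immediate: $h_\alpha(\cdot)$ is supermodular in $X$ because it is modular, and $h(\cdot,X)$ is automatically supermodular along the chain $\mathcal{C}$. The only substantive requirement is that $h$ have increasing differences across its two arguments, which is equivalent to asking that every coefficient $\omega_i^{\pmb{\pi}}(\alpha)$ be monotone in $\alpha$ in one common direction. Reversing the order on $\mathcal{C}$ if necessary (which leaves $\max_{\alpha}$ unchanged) then makes the differences increasing and yields supermodularity of $F^{\pmb{\pi}}=\max_{\alpha}h_\alpha$. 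It is worth noting that this ``common direction'' is exactly what fails for a generic max of modular functions, so it is the heart of the matter.

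The main obstacle, and the real work, is therefore to prove that $\omega_i^{\pmb{\pi}}(\alpha)$ is monotone (in fact nonincreasing) in $\alpha$ for every $i$. I would first rewrite $\omega_i^{\pmb{\pi}}(\alpha)=C_i-\min\{\alpha,\overline{c}_i\}+\overline{\rho}_i^{\pmb{\pi}}(\alpha)-\underline{\rho}_i^{\pmb{\pi}}(\alpha)$ using the identity $-\alpha+[\alpha-\overline{c}_i]_+=-\min\{\alpha,\overline{c}_i\}$, and identify $\overline{\rho}_i^{\pmb{\pi}}(\alpha)=[\pi(\alpha)-r_i^1]_+$ and $\underline{\rho}_i^{\pmb{\pi}}(\alpha)=[\pi(\alpha)-r_i^0(\alpha)]_+$ with $r_i^1=\min\{C_i,\underline{c}_i\}$ and $r_i^0(\alpha)=\min\{C_i,\underline{c}_i+c^*_i(\alpha)\}$. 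Since $r_i^1\le r_i^0(\alpha)$, the $\rho$-difference simplifies to $\min\{\,r_i^0(\alpha)-r_i^1,\ [\pi(\alpha)-r_i^1]_+\,\}$, a nondecreasing function of $\alpha$.

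The crucial ingredient at this point is Lemma~\ref{lemenum}, which guarantees that $\pi(\alpha)$ is a nondecreasing \emph{clip} function of $\alpha$ (constant, then following the diagonal with slope one, then constant); the same clip shape holds for $r_i^0(\alpha)$. I would then run a short piecewise slope analysis: the term $-\min\{\alpha,\overline{c}_i\}$ has slope $-1$ for $\alpha<\overline{c}_i$ and $0$ thereafter, while the $\rho$-difference is nondecreasing with slope at most one. The delicate regime is $\alpha>\overline{c}_i$, where $-\min\{\alpha,\overline{c}_i\}$ is flat: there one checks that $c^*_i(\alpha)$, and hence $r_i^0(\alpha)$, has already saturated, and that $\pi(\alpha)\ge r_i^0(\alpha)$ whenever $\pi$ is still rising (because then $\pi(\alpha)=\alpha>\overline{c}_i\ge r_i^0(\alpha)$), so that the $\rho$-difference is flat as well. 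Combining the regions shows that the increasing $\rho$-part never outpaces the decreasing $-\min\{\alpha,\overline{c}_i\}$ part, so $\omega_i^{\pmb{\pi}}(\alpha)$ is nonincreasing, which closes the argument. I expect this monotonicity verification, and in particular pinning down the $\alpha>\overline{c}_i$ regime, to be the step requiring the most care.
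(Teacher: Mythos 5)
Your proposal is correct, and its architecture coincides with the paper's: both rest on exactly two ingredients, (i) monotonicity of the coefficients $\omega^{\pmb{\pi}}_i(\alpha)$ in $\alpha$, and (ii) a ``maximization preserves supermodularity'' principle. The differences lie in how each ingredient is established, and they are worth comparing. For (ii), the paper proves a bespoke exchange inequality (Lemma~\ref{u-lemma}), whose proof --- pick the four maximizers, order two of them, swap using monotonicity and optimality --- is precisely the standard proof of the Topkis preservation theorem specialized to a product of a chain with $2^{[n]}$; you cite the general lattice result instead, which is legitimate here (the maxima are attained since $\mathcal{C}$ is finite, and componentwise supermodularity plus increasing differences does imply supermodularity on such a product, a fact your argument implicitly needs and which you should state or verify when writing this up). The paper's route buys self-containedness; yours makes transparent why the order-reversal trick works and why a \emph{common} monotone direction for all $\omega^{\pmb{\pi}}_i$ is the only substantive requirement. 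For (i), which both you and the paper identify as the real work, your execution is genuinely different: the paper (Lemma~\ref{w-lemma}) runs an exhaustive case analysis over the position of $\alpha$ relative to $\hat{c}_k,\hat{c}_l$ and the orderings of $C_i,\underline{c}_i,\overline{c}_i$, whereas you compress $\omega^{\pmb{\pi}}_i$ into the closed form $C_i-\min\{\alpha,\overline{c}_i\}+\min\bigl\{r_i^0(\alpha)-r_i^1,\,[\pi(\alpha)-r_i^1]_+\bigr\}$ with $r_i^1=\min\{C_i,\underline{c}_i\}$ and $r_i^0(\alpha)=\min\{C_i,\underline{c}_i+c_i^*(\alpha)\}$, and then argue by slopes. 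Your supporting identities all check out: $\overline{\rho}^{\pmb{\pi}}_i(\alpha)=[\pi(\alpha)-r_i^1]_+$, $\underline{\rho}^{\pmb{\pi}}_i(\alpha)=[\pi(\alpha)-r_i^0(\alpha)]_+$, and $[t-a]_+-[t-b]_+=\min\{b-a,[t-a]_+\}$ for $a\le b$; the Lipschitz bound on $[0,\overline{c}_i]$ and the saturation argument on $(\overline{c}_i,\infty)$ (where $\pi(\alpha)=\alpha>\overline{c}_i\ge r_i^0(\alpha)$ pins the minimum at the constant $r_i^0(\alpha)-r_i^1$ wherever $\pi$ still rises, with continuity stitching the flat pieces together) are both sound. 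This yields a cleaner, less enumeration-heavy verification than the paper's, at the cost of leaning on the structural clip form of $\pi$ from Lemma~\ref{lemenum} --- which, to be fair, the paper's case analysis uses as well.
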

\begin{proof}
	See the Appendix.
\end{proof}	

The greedy algorithm considers all possible $\pmb{\pi}\in \Pi$. For each fixed $\pmb{\pi}$ we start with $X=\emptyset$ and  greedily add the elements $i\in [n]\setminus X$ to $X$ as long as an improvement is possible, i.e. if there is $i\in [n]\setminus X$ such that $F^{\pmb{\pi}}(X\cup\{i\})<F^{\pmb{\pi}}(X)$. 
For a fixed choice of $\pmb{\pi}$, the greedy algorithm thus evaluates the objective $F^{\pmb{\pi}}$ $O(np)$ many times. In total, there are therefore $O(n^3p)$ calls to $F^{\pmb{\pi}}$.

Theorem~\ref{thm-sup} allows us to reduce the search space of the algorithm. Namely if adding $j$ to $X$ does not decrease the value of $F^{\pmb{\pi}}$ at some step, then adding $j$ to the current solution in the subsequent steps also cannot improve the current solution. Hence $j$ can be removed from the further considerations. 

 \begin{algorithm}[htb]
\begin{small}
$val^* \leftarrow \infty$, $X^* \leftarrow \emptyset$\;
\ForEach{$\{(\hat{c}_k,\hat{c}_l)\in \mathcal{A} \times \mathcal{B} : \hat{c}_k \le \hat{c}_l\}$}{
\lForEach{$\alpha\in \mathcal{C}$}{$\pi(\alpha) \leftarrow \max\{\hat{c}_k,\min\{\alpha,\hat{c}_l\}\}$}
$X \leftarrow \emptyset$, $bestval \leftarrow F^{\pmb{\pi}}(X)$, $bestX \leftarrow X$, $improve \leftarrow true$\;
\While{$improve = true$ and  $|X|<p$}{
$improve \leftarrow false$\;
\ForEach{$i\in[n]\setminus X$}
{
$Y \leftarrow X\cup\{i\}$\;
\If{$F^{\pmb{\pi}}(Y) \le bestval$}{
$bestX \leftarrow Y$\;
$bestval \leftarrow F^{\pmb{\pi}}(Y)$\;
$improve \leftarrow true$\;
}}
$X \leftarrow bestX$\;
}
\If{$bestval \le val^*$}{
$val^* \leftarrow bestval$\;
$X^* \leftarrow bestX$\;
}}
\Return{$X^*$}
  \caption{A greedy algorithm for \textsc{RTsR Selection}.}
 \label{alg2sssel}
\end{small} 
\end{algorithm}

In order to illustrate the algorithm consider an instance  of the problem shown in Table~\ref{tab:ex2}.
\begin{table}[htb]
\begin{center}
\caption{Example problem for the greedy algorithm with $n = 4$ and $p = 3$.}\label{tab:ex2}
\begin{tabular}{c|ccc}
$i$ & $C_i$ & $\underline{c}_i$ & $\overline{c}_i$ \\
\hline
1 & 6 & 9 & 13 \\
2 & 1 & 1 & 4 \\
3 & 4 & 2 & 12 \\
4 & 12 & 2 & 6
\end{tabular}
\end{center}
\end{table}
We get $\mathcal{A}=\{1,2,4,6,9,12\}$, $\mathcal{B}=\{1,2,4,6,9,12,13\}$ and $\mathcal{C}=\{1,2,4,6,9,12,13\}$. For $\hat{c}_k=2$ and $\hat{c}_l = 6$, we get $\pi(1)=2$, $\pi(2)=2$, $\pi(4)=4$, $\pi(6)=6$, $\pi(9)=6$, $\pi(12)=6$, and $\pi(13)=6$. For this vector $\pmb{\pi}$ we compute ${\rm P}(\pmb{\pi})$ by solving the following problem:
\begin{align*}
{\rm P}(\pmb{\pi})=\min\ &z \\
\text{s.t. } & z \ge -2 + 5x_1 + 0x_2 + 3x_3 + 11x_4  \\
& z \ge \phantom{-}1 + 4x_1 -1x_2 + 2x_3 + 10x_4 \\
& z \ge \phantom{-}3 + 2x_1 -3x_2 + 2x_3 + 10x_4 \\
& z \ge \phantom{-}5 + 0x_1 -3x_2 + 0x_3 + 10x_4 \\
& z \ge \phantom{-}8 -3x_1 -3x_2 -3x_3 + 10x_4 \\
& z \ge 11 -6x_1 -3x_2 -6x_3 + 10x_4 \\
& z \ge 11 -7x_1 -3x_2 -6x_3 + 10x_4 \\
& x_1 + x_2 + x_3 + x_4 \le 3 \\
& x_1,x_2,x_3,x_4 \in \{0,1\}
\end{align*}
An optimal solution to this problem is $\pmb{x}=(0,1,1,0)$ with ${\rm P}(\pmb{\pi})=2$. In fact, this is an optimal first-stage solution to the sample instance with $Z(\pmb{x})=2$. Figure~\ref{fig:exgreedy} shows the search space of the greedy algorithm for the fixed $\pmb{\pi}$. If  1  is chosen in the first step, then the best achievable regret value is 4 by using the first-stage solution $X=\{1,2\}$. However, the optimal regret value is 2 by using the first-stage solution $X=\{2,3\}$.
So, the example demonstrates that the approximation ratio of the greedy algorithm is at least~2. We conjecture that  the algorithm is indeed a 2-approximation, so the example presented is a worst one. The search space can be reduced by applying Theorem~\ref{thm-sup}. For example, 4 need only be considered in the first step, because $F^{\pmb{\pi}}(\emptyset)<F^{\pmb{\pi}}(\{4\})$. So adding 4 in the next steps cannot decrease the value of the current solution and~4 can be removed from further considerations. Also we need not to consider adding~3 to $\{1,2\}$, because by Theorem~\ref{thm-sup}, $F^{\pmb{\pi}}(\{1,2\}\cup \{3\})-F^{\pmb{\pi}}(\{1,2\})\geq F^{\pmb{\pi}}(\{1\}\cup\{3\})-F^{\pmb{\pi}}(\{1\})>0$.

\begin{figure}[htb]
\begin{center}
\includegraphics[width=0.4\textwidth]{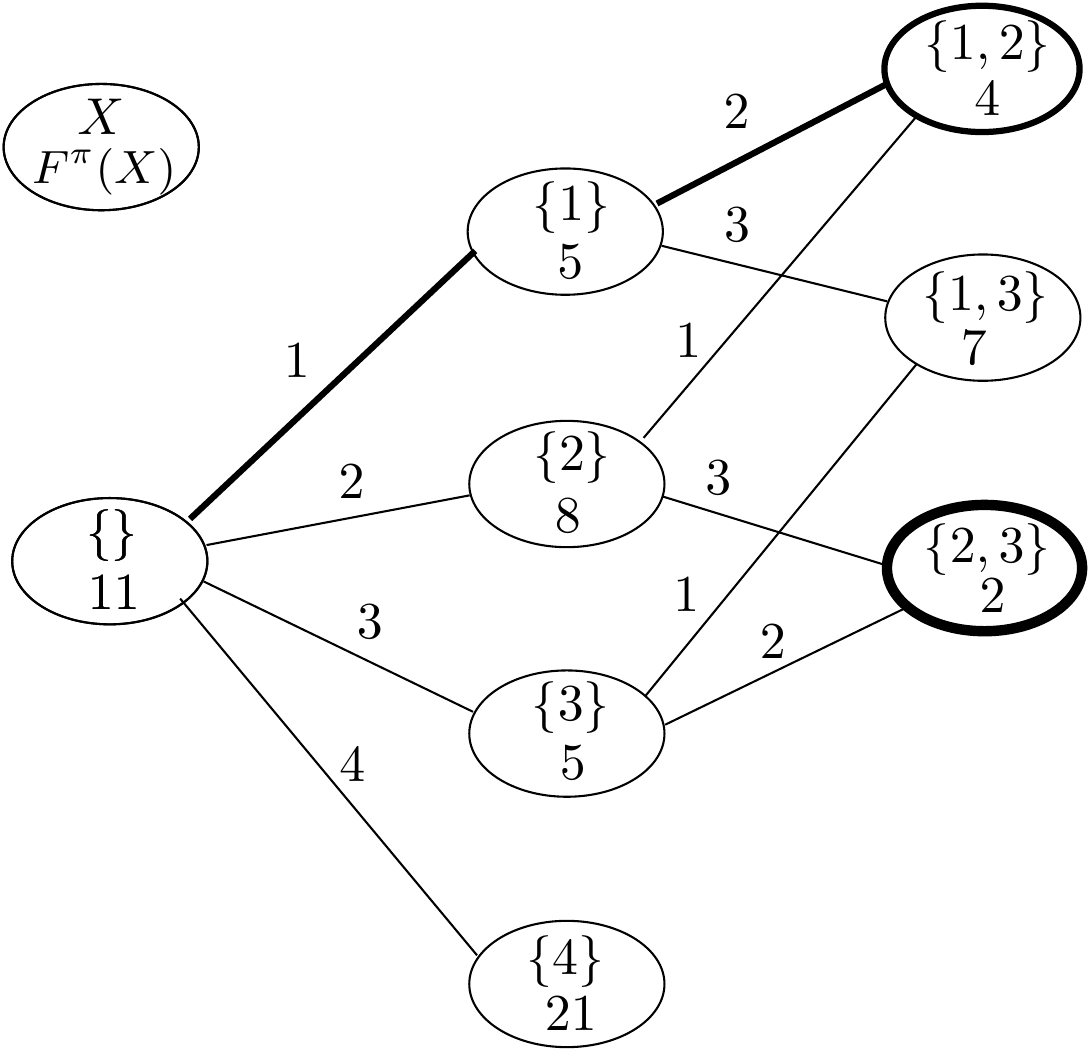}
\caption{Greedy search space for the sample problem.}\label{fig:exgreedy}
\end{center}
\end{figure}

The example suggest an improvement of the greedy algorithm. Observe, that we can achieve the optimal solution by adding 3 instead of 1 to $X=\emptyset$. So it may be advantageous to start from all possible subsets of $[n]$ with up to $L$ items. For a small constant $L$, the algorithm remains polynomial.

\section{Conclusions}
\label{sec:conclusions}
In this paper we have discussed a class of two-stage combinatorial optimization problems under interval uncertainty representation. We have used the maximum regret criterion to choose the best first-stage solution. The problem has different properties than the corresponding minmax regret single-stage counterpart. In particular, there is no easy characterization of a worst-case scenario for a given first-stage solution and computing its maximum regret can be NP-hard even if the deterministic problem is polynomially solvable. We have proposed a general procedure for solving the problem, which is based on a standard row and column generation technique. This method can be used to compute optimal solutions for the problems with reasonable size.
Furthermore,
we have provided a characterization of the problem complexity for two variants of  \textsc{TStR shortest Path} and proposed compact MIP formulations for \textsc{TStR Selection}. There is a number of open problems concerning the considered approach. The computational complexity of \textsc{TStR Selection} is open. It is also interesting to explore the complexity of more general class of matroidal problems, for which \textsc{Selection} is a special case (another important special case is the \textsc{Minimum Spanning Tree} problem). Finally, no approximation algorithm is known for \textsc{TStR}~$\mathcal{P}$. We have showed that the mid-point heuristic, used in the single-stage problems, does not guarantee any approximation ratio. We conjecture that the greedy algorithm proposed may be indeed a 2-approximation one for  \textsc{TStR Selection}. Proving this (or showing a counterexample) is an interesting open problem.

\subsubsection*{Acknowledgements}
The second and third author were supported by
 the National Science Centre, Poland, grant 2017/25/B/ST6/00486.

%


\section*{Appendix}

To prove Theorem~\ref{thm-sup}, we first derive two lemmas.

\begin{lem}\label{w-lemma}
The coefficients $\omega^{\pmb{\pi}}_i(\alpha)$, $i\in [n]$, are nonincreasing in $\alpha$.
\end{lem}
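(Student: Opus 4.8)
The plan is to first rewrite $\omega^{\pmb{\pi}}_i(\alpha)$ in a form where monotonicity becomes a pure slope check. Inspecting the definitions of $\underline{\rho}^{\pmb{\pi}}_i$ and $\overline{\rho}^{\pmb{\pi}}_i$, I would observe that they are exactly the optimal dual values $\overline{\rho}^{\pmb{\pi}}_i(\alpha)=[\pi(\alpha)-r_i(\alpha,1)]_+$ and $\underline{\rho}^{\pmb{\pi}}_i(\alpha)=[\pi(\alpha)-r_i(\alpha,0)]_+$, where $r_i(\alpha,1)=\min\{C_i,\underline{c}_i\}$ is constant in $\alpha$ and $r_i(\alpha,0)=\min\{C_i,\underline{c}_i+c^*_i(\alpha)\}$. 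Together with the identity $-\alpha+[\alpha-\overline{c}_i]_+=-\min\{\alpha,\overline{c}_i\}$, this turns the coefficient into
\[ \omega^{\pmb{\pi}}_i(\alpha)=C_i-\min\{\alpha,\overline{c}_i\}+[\pi(\alpha)-m_1]_+-[\pi(\alpha)-m_0(\alpha)]_+, \]
with $m_1:=r_i(\alpha,1)$ and $m_0(\alpha):=r_i(\alpha,0)$.

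All of the ingredients ($\pi$, $c^*_i$, $m_0$, $\min\{\alpha,\overline{c}_i\}$, and the two hinge functions) are continuous and piecewise linear in $\alpha$, so it suffices to show that the right derivative $(\omega^{\pmb{\pi}}_i)'(\alpha)$ is $\le 0$ at every non-breakpoint. I would record four facts before differentiating: (a) $m_0(\alpha)\ge m_1$ for all $\alpha$, since $c^*_i(\alpha)\ge 0$; (b) $m_0(\alpha)\le\overline{c}_i$, since $\underline{c}_i+c^*_i(\alpha)\le\overline{c}_i$; (c) by Lemma~\ref{lemenum}, $\pi$ has right derivative $\mathbf{1}[\hat{c}_k<\alpha<\hat{c}_l]$ and satisfies $\pi(\alpha)=\alpha$ wherever that derivative equals $1$; (d) $m_0$ has right derivative supported on $(\underline{c}_i,\overline{c}_i)$, hence wherever $m_0'(\alpha)=1$ we have $\alpha<\overline{c}_i$.

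Differentiating and using (a) to note $\mathbf{1}[\pi(\alpha)>m_0(\alpha)]\le\mathbf{1}[\pi(\alpha)>m_1]$, the argument splits into three regions according to the position of $\pi(\alpha)$. When $\pi(\alpha)\le m_1$ both hinge terms are flat and $(\omega^{\pmb{\pi}}_i)'=-\mathbf{1}[\alpha<\overline{c}_i]\le 0$. When $m_1<\pi(\alpha)\le m_0(\alpha)$ one gets $(\omega^{\pmb{\pi}}_i)'=-\mathbf{1}[\alpha<\overline{c}_i]+\pi'(\alpha)$; if $\pi'=1$ then by (c) $\alpha=\pi(\alpha)\le m_0(\alpha)\le\overline{c}_i$ using (b), so the indicator is $1$ and the slope is $0$. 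Finally, when $\pi(\alpha)>m_0(\alpha)$ the two hinge slopes combine to leave $(\omega^{\pmb{\pi}}_i)'=-\mathbf{1}[\alpha<\overline{c}_i]+m_0'(\alpha)$, and by (d) $m_0'=1$ forces $\alpha<\overline{c}_i$, again giving slope $0$; otherwise the slope is $\le 0$. Continuity then upgrades the pointwise slope bound to monotonicity on all of $\mathbb{R}_+$, and in particular on $\mathcal{C}$.

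The main obstacle I anticipate is the bookkeeping around the coupled breakpoints of $\pi$, $m_0$ and the two hinges: a priori the slope-$1$ region of $\pi$, the slope-$1$ region of $m_0$, and the threshold $\alpha=\overline{c}_i$ need not line up. The insight that resolves this is that $\pi$ increases only along the diagonal $\pi(\alpha)=\alpha$ (property (c)) and that both $m_0(\alpha)$ and any point of increase of $m_0$ are capped by $\overline{c}_i$ (properties (b) and (d)); these let me convert every ``$\pi(\alpha)>$ threshold'' condition into a comparison of $\alpha$ with $\overline{c}_i$, and thereby cancel the unfavourable $+\pi'$ or $+m_0'$ against the favourable $-\mathbf{1}[\alpha<\overline{c}_i]$.
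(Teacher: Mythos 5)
Your proof is correct, and it takes a genuinely different route from the paper's. You first compress the dual terms into hinge functions, observing that $\overline{\rho}^{\pmb{\pi}}_i(\alpha)=[\pi(\alpha)-m_1]_+$ and $\underline{\rho}^{\pmb{\pi}}_i(\alpha)=[\pi(\alpha)-m_0(\alpha)]_+$ with $m_1=\min\{C_i,\underline{c}_i\}$, $m_0(\alpha)=\min\{C_i,\underline{c}_i+c^*_i(\alpha)\}$, and fold $-\alpha+[\alpha-\overline{c}_i]_+$ into $-\min\{\alpha,\overline{c}_i\}$; then a single slope check at non-breakpoints, split by the position of $\pi(\alpha)$ relative to $m_1$ and $m_0(\alpha)$, does all the work, with the structural facts (thresholds ordered and capped by $\overline{c}_i$; $\pi$ and $m_0$ increase only along the diagonal, hence only where $\alpha<\overline{c}_i$) cancelling every $+1$ slope against the $-1$ from $-\min\{\alpha,\overline{c}_i\}$. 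The paper instead argues by explicit enumeration: three regimes of $\pi$ (constant at $\hat{c}_k$, diagonal, constant at $\hat{c}_l$) crossed with the three orderings of $C_i,\underline{c}_i,\overline{c}_i$, writing out the piecewise formula for $\omega^{\pmb{\pi}}_i$ in each cell and reading off monotonicity. Your approach buys uniformity — no case explosion over the ordering of $C_i,\underline{c}_i,\overline{c}_i$, which the paper handles in three sub-cases of its case 3 and leaves partly implicit in cases 1 and 2 — and it makes the mechanism of the cancellation visible rather than incidental; the paper's approach buys concreteness, producing closed-form expressions in each regime that connect directly to the pictures of $r_i(\alpha,0)$ in its figure. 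The two are linked by your opening observation, which is exactly the inversion of how the paper derived $\underline{\rho}^{\pmb{\pi}}_i,\overline{\rho}^{\pmb{\pi}}_i$ from $r_i(\alpha,x_i)$ in the proof of Lemma~\ref{lemenum}; one minor point of care, which your restriction to non-breakpoints already handles, is that the indicator $\mathbf{1}[\alpha<\overline{c}_i]$ is strict while facts (b) and (c) only give $\alpha\le\overline{c}_i$, so the boundary point $\alpha=\overline{c}_i$ must be (and is) excluded as a kink of $\min\{\alpha,\overline{c}_i\}$ before continuity finishes the argument.
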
\begin{proof}
Recall that
\begin{align*}
\omega^{\pmb{\pi}}_i(\alpha) &= C_i - \alpha + [\alpha-\overline{c}_i]_+ + \overline{\rho}^{\pmb{\pi}}_i(\alpha) - \underline{\rho}^{\pmb{\pi}}_i(\alpha) 
\end{align*}
where
\begin{align*}
\underline{\rho}^{\pmb{\pi}}_i(\alpha) &= \max\{0,\pi(\alpha) - C_i, \pi(\alpha) - \underline{c}_i - [\alpha-\underline{c}_i]_+ + [\alpha-\overline{c}_i]_+ \} \\
\overline{\rho}^{\pmb{\pi}}_i(\alpha) &= \max\{ 0 , \pi(
\alpha) - C_i, \pi(\alpha) - \underline{c}_i\}
\end{align*}
and 
$$\pi(\alpha)=\max\{\hat{c}_k,\min\{\alpha,\hat{c}_l\}\}$$
 for some fixed $\hat{c}_k\in \mathcal{A}$ and $\hat{c}_l\in \mathcal{B}$, $\hat{c}_k\leq \hat{c}_l$.
 We consider the following cases:
\begin{enumerate}
\item $\alpha \le \hat{c}_k$:  In this case $\pi(\alpha)=\hat{c}_k$ and $\overline{\rho}^{\pmb{\pi}}_i(\alpha)=\overline{\rho}^{\pmb{\pi}}_i$ is constant. Hence

\begin{align*}
\omega^{\pmb{\pi}}_i(\alpha) &= 
 \begin{cases}
C_i -\alpha + \overline{\rho}^{\pmb{\pi}}_i - \max\{0,\hat{c}_k - C_i, \hat{c}_k - \underline{c}_i\} & \text{ if } \alpha\le \underline{c}_i \\
C_i -\alpha+ \overline{\rho}^{\pmb{\pi}}_i -  \max\{0,\hat{c}_k - C_i, \hat{c}_k - \alpha\} & \text{ if } \underline{c}_i < \alpha< \overline{c}_i \\
C_i - \overline{c}_i+\overline{\rho}^{\pmb{\pi}}_i -\max\{0,\hat{c}_k - C_i, \hat{c}_k - \overline{c}_i\} & \text{ if } \overline{c}_i \le \alpha
\end{cases}
\end{align*}
and $\omega^{\pi}_i(\alpha)$ is nonincreasing in $[0,\hat{c}_k]$

\item $\alpha\geq \hat{c}_l$: This case is analogous to the case 1 with $\pi(\alpha)=\hat{c}_l$ and constant $\overline{\rho}^{\pmb{\pi}}(\alpha)$, so $\omega^{\pmb{\pi}}_i(\alpha)$  is nonincreasing in $[\hat{c}_l,\infty)$

\item $\hat{c}_k \le \alpha \le \hat{c}_l$: Then $\pi(\alpha)=\alpha$ and we distinguish further:
\begin{enumerate}
\item $C_i \le \underline{c}_i \le \overline{c}_i$: Then
\[ \overline{\rho}^{\pmb{\pi}}_i(\alpha) - \underline{\rho}^{\pmb{\pi}}_i(\alpha) = [\alpha-C_i]_+ - [\alpha-C_i]_+ = 0 \]
and $\omega^{\pmb{\pi}}_i(\alpha)=C_i - \alpha + [\alpha-\overline{c}_i]_+$ is nonincreasing in $[\hat{c}_k,\hat{c}_l]$.
\item $\underline{c}_i \le C_i \le \overline{c}_i$: Then
\[ \overline{\rho}^{\pmb{\pi}}_i(\alpha) - \underline{\rho}^{\pmb{\pi}}_i(\alpha) = [\alpha-\underline{c}_i]_+ - [\alpha-C_i]_+\]
and therefore
\begin{align*}
\omega^{\pmb{\pi}}_i(\alpha) &= C_i - \alpha + [\alpha-\overline{c}_i]_+ + [\alpha-\underline{c}_i]_+ - [\alpha-C_i]_+ \\
&= \begin{cases}
C_i -\alpha & \text{ if } \alpha \le \underline{c}_i \\
C_i - \underline{c}_i & \text{ if } \underline{c}_i < \alpha \le C_i \\
C_i -\underline{c}_i - \alpha + C_i & \text{ if } C_i < \alpha \le \overline{c}_i  \\
C_i -\overline{c}_i - \underline{c}_i + C_i & \text{ if } \overline{c}_i <\alpha \\
\end{cases}
\end{align*}
and
$\omega^{\pmb{\pi}}_i(\alpha)$ is nonincreasing in $[\hat{c}_k,\hat{c}_l]$.

\item $\underline{c}_i \le \overline{c}_i \le C_i$: Then
\begin{align*}
\omega^{\pmb{\pi}}_i(\alpha) &= C_i - \alpha + [\alpha-\overline{c}_i]_+ + [\alpha-\underline{c}_i]_+ - [\alpha-\underline{c}_i - [\alpha-\underline{c}_i]_+ + [\alpha-\overline{c}_i]_+ ]_+ \\
&= \begin{cases}
C_i - \alpha & \text{ if } \alpha \le \underline{c}_i \\
C_i - \underline{c}_i & \text{ if } \underline{c}_i < \alpha <\overline{c}_i \\
C_i + \underline{c}_i & \text{ if } \overline{c}_i \le \alpha
\end{cases}
\end{align*}
\end{enumerate}
\end{enumerate}
and
$\omega^{\pmb{\pi}}_i(\alpha)$ is nonincreasing in $[\hat{c}_k,\hat{c}_l]$.
\end{proof}

\begin{lem}\label{u-lemma}
For any functions $f,g, h: [a,b] \to \mathbb{R}$,  attaining a maximum in $[a,b]$, where $g$ and $h$ are nonincreasing in $[a,b]$, it holds that
\[ \max_u \Big( f(u) + g(u) \Big)
+ \max_u \Big( f(u) + h(u) \Big) 
\le
\max_u f(u)
+ \max_u \Big( f(u) + g(u) + h(u) \Big). \]
\end{lem}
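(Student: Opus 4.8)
The plan is to reduce the inequality to a single scalar comparison that follows immediately from the monotonicity hypothesis. First I would fix a point $u_g\in[a,b]$ at which $f(u)+g(u)$ attains its maximum and a point $u_h\in[a,b]$ at which $f(u)+h(u)$ attains its maximum; both exist by assumption. The left-hand side is then exactly $f(u_g)+g(u_g)+f(u_h)+h(u_h)$, and the task is to bound this from above by $\max_u f(u)+\max_u\big(f(u)+g(u)+h(u)\big)$.

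The key idea is to feed convenient (not necessarily optimal) evaluation points into the two maxima on the right-hand side, exploiting that $g$ and $h$ are nonincreasing. I would split into two symmetric cases according to the relative order of $u_g$ and $u_h$. Suppose first $u_g\le u_h$. Then I would lower-bound the right-hand side by evaluating $\max_u f(u)$ at $u_h$ and $\max_u\big(f(u)+g(u)+h(u)\big)$ at $u_g$, obtaining
\[ \max_u f(u) + \max_u\big(f(u)+g(u)+h(u)\big) \ge f(u_h) + \big(f(u_g)+g(u_g)+h(u_g)\big). \]
Subtracting the common terms $f(u_g)+g(u_g)+f(u_h)$ from the desired inequality, the only surviving requirement is $h(u_h)\le h(u_g)$, which holds because $h$ is nonincreasing and $u_g\le u_h$. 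In the symmetric case $u_h\le u_g$, I would instead evaluate $\max_u f(u)$ at $u_g$ and $\max_u\big(f(u)+g(u)+h(u)\big)$ at $u_h$, and the leftover inequality becomes $g(u_g)\le g(u_h)$, which follows from $g$ being nonincreasing. Each case uses the monotonicity of exactly one of the two functions, and the two cases jointly require both hypotheses.

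I do not expect a serious obstacle: the statement is essentially a rearrangement/exchange inequality, and the single subtlety is choosing which maximizer feeds which of the two right-hand maxima so that the residual term is exactly the one controlled by monotonicity. The main thing to be careful about is keeping the bookkeeping of the two cases straight and invoking the hypothesis that the maxima are attained so that $u_g$ and $u_h$ are well defined. Once the evaluation points are fixed correctly, the cancellation is immediate and no further estimates are needed.
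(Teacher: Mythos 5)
Your proof is correct and takes essentially the same approach as the paper's: both arguments fix maximizers of $f+g$ and $f+h$, compare their positions, transfer the value of the monotone function from one point to the other, and then regroup the terms against $\max_u f(u)$ and $\max_u\big(f(u)+g(u)+h(u)\big)$. The only cosmetic difference is that you write out the two symmetric cases explicitly, while the paper handles one of them by a without-loss-of-generality exchange of $g$ and $h$.
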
\begin{proof}
Let $u_1$, $u_2$, $u_3$ and $u_4$ be such that
\begin{align*}
\max_u \Big( f(u) + g(u) \Big) &= f(u_1) + g(u_1) \\
\max_u \Big( f(u) + h(u) \Big) &= f(u_2) + h(u_2) \\
\max_u \Big( f(u) \Big) &= f(u_3) \\
\max_u \Big( f(u) + g(u) + h(u) \Big) &= f(u_4) + g(u_4) + h(u_4).
\end{align*}
We can assume without loss of generality that $u_2 \le u_1$ (otherwise, we exchange $h$ and $g$). Then  $g(u_2)\ge g(u_1)$ by the monotonicity of $g$, $f(u_1) \le f(u_3) $, because $u_3$ maximizes $f$ and
$f(u_1)+g(u_1)+f(u_2)+h(u_2)\leq f(u_3)+g(u_2)+f(u_2)+h(u_2)\leq f(u_3)+g(u_4)+f(u_4)+h(u_4).$
\end{proof}

We are now in the position to prove that $F^{\pmb{\pi}}$ is indeed supermodular.

\begin{proof}[Proof of Theorem~\ref{thm-sup}.]
Let any $X\subseteq Y\subseteq[n]$ and $j\in[n]\setminus Y$ be given such that $|Y|+1\le p$. We need to show that $F^{\pmb{\pi}}(X\cup\{j\}) - F^{\pmb{\pi}}(X) \le F^{\pmb{\pi}}(Y\cup\{j\}) - F^{\pmb{\pi}}(Y)$. This is the case if and only if
\begin{align*}
& \max_{\alpha} \left( \nu^{\pmb{\pi}}(\alpha) + \sum_{i\in X} \omega^{\pmb{\pi}}_i(\alpha) + \omega^{\pmb{\pi}}_j(\alpha) \right) 
-  \max_{\alpha} \left( \nu^{\pmb{\pi}}(\alpha) + \sum_{i\in X} \omega^{\pmb{\pi}}_i(\alpha) \right) \\
\le &  \max_{\alpha} \left( \nu^{\pmb{\pi}}(\alpha) + \sum_{i\in Y} \omega^{\pmb{\pi}}_i(\alpha) + \omega^{\pi}_j(\alpha) \right) 
-  \max_{\alpha} \left( \nu^{\pmb{\pi}}(\alpha) + \sum_{i\in Y} \omega^{\pmb{\pmb{\pi}}}_i(\alpha) \right). 
\end{align*}
Fix $f(\alpha)=\nu^{\pmb{\pi}}(\alpha) + \sum_{i\in X} \omega^{\pmb{\pi}}_i(\alpha)$, $g(\alpha)=\omega^{\pmb{\pi}}_j(\alpha)$ and $h(\alpha)=\sum_{i\in Y\setminus X} \omega^{\pmb{\pi}}_i(\alpha)$. The inequality can be rewritten equivalently as 
$$\max_{\alpha}(f(\alpha)+g(\alpha))-\max_{\alpha} f(\alpha)\leq \max_{\alpha}(f(\alpha)+h(\alpha)+g(\alpha))-\max_{\alpha}(f(\alpha)+h(\alpha)).$$
According to Lemma~\ref{w-lemma}, the functions $g(\alpha)$ and $h(\alpha)$ are nonincreasing. Applying Lemma~\ref{u-lemma}, the theorem thus follows.
\end{proof}

\end{document}